\newtheorem{theorem}{Theorem}
\newtheorem{corollary}{Corollary}
\newtheorem{proposition}{Proposition}
\newcommand{\card}{{\rm card}}
\newcommand{\trqp}{\mathrm{tr}_{q/p}}
\newcommand{\ps}{{\perp_s}}
\newcommand{\pt}{{\perp_{ts}}}
\newcommand{\ph}{{\perp_h}}
\title[EAQECCs over arbitrary finite fields]{Entanglement-assisted quantum error-correcting codes over arbitrary finite fields}
\author{C. Galindo, F. Hernando, R. Matsumoto and D. Ruano}
\curraddr{\texttt{Carlos Galindo and Fernando Hernando:} Instituto
Universitario de Matem\'aticas y Aplicaciones de Castell\'on and
Departamento de Matem\'aticas, Universitat Jaume I, Campus de Riu
Sec. 12071 Castell\'{o} (Spain)\\
\texttt{Ryutaroh Matsumoto:} Department of Information and Communications Engineering,
Tokyo Institute of Technology, 152-8550 Tokyo, Japan,
and Department of Mathematical Sciences, Aalborg University,
  Denmark\\
\texttt{Diego Ruano:} IMUVA-Mathematics Research Institute, Universidad de Valladolid, 47011 Valladolid (Spain).}
\email{
galindo@uji.es;
carrillf@uji.es;
ryutaroh@ict.e.titech.ac.jp;
diego.ruano@uva.es}
\date{}
\thanks{Supported by the Spanish Ministry of Economy/FEDER: grants MTM2015-65764-C3-1-P, MTM2015-65764-C3-2-P, MTM2015-69138-REDT and RYC-2016-20208 (AEI/FSE/UE), the University Jaume I: grant UJI-B2018-10, Spanish Junta de CyL: grant VA166G18, and JSPS Grant No.\ 17K06419.}
\subjclass[2010]{81P70; 94B65; 94B05}
\keywords{Entanglement-assisted quantum error-correcting codes; Symplectic, Hermitian and Euclidean duality; Gilbert-Varshamov bound}
\begin{document}

\begin{abstract}
We prove that the known formulae for computing the optimal number of maximally entangled pairs required  for entanglement-assisted quantum error-correcting codes (EAQECCs) over the binary field hold for codes over arbitrary finite fields as well. We also give a Gilbert-Varshamov bound for EAQECCs and  constructions of EAQECCs coming from punctured self-orthogonal linear codes which are valid for any finite field. \end{abstract}

\maketitle

\section{Introduction}
The Shor's proposal of using quantum error correction for reducing decoherence in quantum computation \cite{shor95} and his polynomial-time algorithms for prime factorization and discrete logarithms on quantum computers \cite{22RBC} clearly illustrate the feasibility  and importance  of quantum computation and quantum error-correcting.

Most of the quantum error-correcting codes (QECCs) come from classical codes. The first known stabilizer quantum codes were binary  \cite{calderbank98,gott}. Later, stabilizer codes over any finite field were introduced and studied, they are of particular interest because their utility in fault-tolerant computation. Following \cite{ketkar06}, one can obtain QECCs of length $n$ over a finite field $\mathbb{F}_q$ from additive codes included in $\mathbb{F}_q^{2n}$ which are self-orthogonal with respect to a trace symplectic form. Working on this construction, QECCs of length $n$ over $\mathbb{F}_q$ can be derived from classical self-orthogonal codes with respect to the Hermitian inner product included in $\mathbb{F}_{q^2}^{n}$, and also from codes in $\mathbb{F}_{q}^{n}$ which are self-orthogonal with respect to the Euclidean inner product.

The previously mentioned self-orthogonality conditions (or some similar requirements of inclusion of codes in the dual of others) prevent the usage of many common classical codes for providing quantum codes. Brun, Dvetak and Hsieh in \cite{brun1} proposed to share entanglement between encoder and decoder to simplify the theory of quantum error-correction and increase the communication capacity. With this new formalism, entanglement-assisted quantum stabilizer codes can be constructed from any classical linear code giving rise to entanglement-assisted quantum error-correcting codes (EAQECCs). A formula to obtain the optimal number of ebits required for a binary entanglement-assisted code of  Calderbank-Shor-Steane (CSS) type was showed in \cite{hsieh} and formulae for more general constructions, including the consideration of duality with respect to symplectic forms were given in \cite{wilde08}. In fact, \cite{wilde08} proves that the optimal number $c$ of ebits required for a binary   entanglement-assisted quantum error-correcting code with generator matrix $(H_X|H_Z)$ is $\mathrm{rank}(H_X H_Z^T - H_Z H_X^T)/2$, where the superindex $T$ means transpose. Remark 1 in that paper states, without a proof, that the same formula holds when considering codes over finite fields $\mathbb{F}_p$, $p$ being a prime number, a proof can be found in \cite{schin}.

Recently, one can find in the literature some papers where the above formula (or formulae derived from it) are used for determining the entanglement corresponding to EAQECCs over arbitrary finite fields (see, for instance \cite{chen,Guenda18,Liu18,qian}). Although it holds for any finite field, we have found no proof in the literature and, thus, this work fills this gap. Therefore, this paper is devoted to prove formulae for the minimum required number $c$ of pairs of maximally entangled quantum states, corresponding to EAQECCs codes obtained from linear codes $C$   over any finite field, by using symplectic forms, or Hermitian or Euclidean inner products. We also show (see Subsection \ref{geometric}) that in the Hermitian and Euclidean cases,  $c$ is easy to compute when one chooses, as a basis of the linear code $C$ of length $n$,  a subset of those vectors giving rise to a geometric decomposition of the coordinate space of dimension $n$ that contains $C$ \cite{MetricStructure}.

In \cite{lai12}, a Gilbert-Varshamov type formula for the existence of binary EAQECCs was presented. Still with the idea of extending the binary case to the general one and with the help of our study of entanglement-assisted codes, we give a Gilbert-Varshamov type formula which is valid for any finite field. Furthermore, we will also provide conditions of existence and parameters of EAQECCs coming from classical self-orthogonal codes (say $C$) over any finite field. Since fewer qudits should be transmitted through a noisy channel, they perform better. Constructions of this type have been considered in the binary case for giving a coding scheme with imperfect e-bits \cite{lai12}.

Theorems \ref{thm:fp}, \ref{thm:hermitian} and \ref{thm:euclidean} contain our results about the entanglement required for EAQECCs over arbitrary finite fields. Section \ref{sect2} also explains how, in the Hermitian and Euclidean cases, nice bases of the vector spaces that contain the supporting linear codes allow us to get the corresponding required number $c$.  Section \ref{sect3} is devoted to state the mentioned Gilbert-Varshamov type bound and Section \ref{sec:wip} contains our results about EAQECCs coming from QECC  by considering symplectic, Hermitian or Euclidean duality.

\section{EAQECCs over $\mathbb{F}_q$}
\label{sect2}

The first three subsections of this section are devoted to prove formulae for computing the optimal entanglement corresponding to EAQECCs over arbitrary finite fields when considering symplectic forms, or Hermitian or Euclidean inner products.

\subsection{The symplectic case}
\label{symplectic}
Let $p$ be a prime number and $q$ a positive power $q=p^m$. Denote by $\mathbb{F}_q$ the finite field with $q$ elements. We also write $\mathbb{C}$ the field of complex numbers and $\mathbb{C}^r$, $r$ a positive integer, the $r$-coordinate space over $\mathbb{C}$.

Let $n$ be a positive integer, it is known (see for instance \cite[Theorem 13]{ketkar06}) that an $((n, K,d))_q$ stabilizer quantum code over $\mathbb{F}_q$ can be obtained from an additive code $C \subseteq \mathbb{F}_q^{2n}$ of size $q^n/K$ such that $C \subseteq C^{\perp_{ts}}$, and $\mathrm{swt} (C^{\perp_{ts}} \setminus C)= d$ when $K \geq 1$ and $d= \mathrm{swt} (C)$ otherwise. In the above result, we have considered the following notation which will be used in this paper as well. The symbol $\perp_{ts}$ means dual with respect to the {\it trace-symplectic form} on $\mathbb{F}_q^{2n}$:
\[
\left(\vec{a}|\vec{b}\right) \cdot_{ts} \left(\vec{a'}|\vec{b'}\right) = \trqp \left(\vec{a} \cdot \vec{b'} - \vec{a'} \cdot \vec{b}\right) \in \mathbb{F}_p,
\]
where $\left(\vec{a}|\vec{b}\right), \left(\vec{a'}|\vec{b'}\right) \in \mathbb{F}_q^{2n}$, $\vec{a} \cdot \vec{b'}$ and $\vec{a'} \cdot \vec{b}$ are Euclidean products, and $\trqp: \mathbb{F}_q \rightarrow \mathbb{F}_p$, $$ \trqp(x)= x + x^p + \cdots + x^{p^{m-1}},$$ is the standard trace map. Also the symplectic weight is defined as
\[
\mathrm{swt} \left(\vec{a}|\vec{b}\right) = \card \left\{i \;| \; (a_i,b_i) \neq (0,0), 1 \leq i \leq n \right\},
\]
where $\vec{a}=(a_1,a_2, \ldots, a_n)$ and $\vec{b}=(b_1,b_2, \ldots, b_n)$.

We will also use the symplectic form on $\mathbb{F}_q^{2n}$ defined as
\[
\left(\vec{a}|\vec{b}\right) \cdot_{s} \left(\vec{a'}|\vec{b'}\right) = \left(\vec{a} \cdot \vec{b'} - \vec{a'} \cdot \vec{b}\right) \in \mathbb{F}_q,
\]
and the corresponding dual space for an $\mathbb{F}_q$-linear code $C \subseteq \mathbb{F}_q^{2n}$ will be denoted by $C^{\perp_s}$.

For the first part of this paper, we fix a trace orthogonal basis of $\mathbb{F}_q$ over $\mathbb{F}_p$, $B=\{\gamma_1, \gamma_2, \ldots, \gamma_m\}$. Recall that $B$ is a basis of $\mathbb{F}_q$ as a $\mathbb{F}_p$-linear space satisfying that the matrix
\[
M= \left(\trqp(\gamma_i \gamma_j)\right)_{1\leq i \leq m;\;1\leq j \leq m}
\]
is an invertible and diagonal matrix of size $m$ with coefficients in $\mathbb{F}_p$. The existence of a basis as $B$ is proved in \cite{sero}. We choose a basis as $B$ by convenience, but our results also hold if one considers any other basis. Now  consider the  $\mathbb{F}_p$-linear map
\[
h: \mathbb{F}_p^m \rightarrow \mathbb{F}_q, \;\; h(x_1,x_2, \ldots,x_m)= \sum_{i=1}^m x_i \gamma_i:= x.
\]
The map $h$ is an isomorphism of $\mathbb{F}_p$-linear spaces and for $x \in \mathbb{F}_q$, $h^{-1}(x)$ gives the coordinates of $x$ in the basis $B$.

Denote by $\Omega$ the inverse matrix of $M$, $\Omega$ is a size $m$ diagonal invertible matrix with entries in $\mathbb{F}_p$. Let $\omega_1, \omega_2, \ldots, \omega_m$ be its diagonal and define the map:
\[
\phi: \mathbb{F}_p^{2m}=\mathbb{F}_p^m \times \mathbb{F}_p^m \longrightarrow \mathbb{F}_q^2,
\]
given by
\begin{eqnarray*}
\phi\left((x_1,x_2, \ldots,x_m)|(y_1,y_2, \ldots,y_m)\right) &=& \left(\sum_{i=1}^m x_i \gamma_i, \sum_{i=1}^m y_i \omega_i \gamma_i\right)\\
&=& \left( h(x_1,x_2, \ldots,x_m), h[(y_1,y_2, \ldots,y_m)\Omega]\right).
\end{eqnarray*}
Taking into account that $\omega_i \in \mathbb{F}_p$, $B'=\{\omega_i \gamma_i\}_{i=1}^m$ is also a trace orthogonal basis of $\mathbb{F}_q$ over $\mathbb{F}_p$ whose matrix
$\left(\trqp(\omega_i \gamma_i \omega_j \gamma_j)\right)_{1\leq i \leq m;\;1\leq j \leq m}$ is $\Omega$.

In sum, $\phi$ is an isomorphism of $\mathbb{F}_p$-linear spaces and for $(x,y) \in \mathbb{F}_q^2$,
\[
\phi^{-1} (x,y) = \left(\phi^{-1}_1 (x,y), \phi^{-1}_2 (x,y)\right) \in \mathbb{F}_p^{2m},
\]where $\phi^{-1}_{1}$ (respectively, $\phi^{-1}_{2}$) is the first (respectively, second) projection of
$\phi^{-1}$ over the first (respectively, second) component of the Cartesian product $\mathbb{F}_p^m \times \mathbb{F}_p^m$. One has that $\phi^{-1}(x,y)$ simply gives a pair whose first components are the coordinates of $x$ in the basis $B$ and the second ones are  those of $y$ in the basis $B'$.

The above map can be extended to products of $n$ copies giving rise to
the map
\[
\phi^E: \mathbb{F}_p^{2mn}=(\mathbb{F}_p^m)^n \times (\mathbb{F}_p^m)^n \longrightarrow \mathbb{F}_q^{n} \times \mathbb{F}_q^{n} =\mathbb{F}_q^{2n},
\]
defined by
\begin{multline*}
\phi^E \left[\left( (a_{11}, \ldots, a_{1m}), \ldots, (a_{n1}, \ldots, a_{nm})|(b_{11}, \ldots, b_{1m}), \ldots, (b_{n1}, \ldots, b_{nm}) \right) \right] =\\
\left( h(a_{11}, \ldots, a_{1m}), \ldots, h(a_{n1}, \ldots, a_{nm})
| h[(b_{11}, \ldots, b_{1m})\Omega], \ldots, h[(b_{n1}, \ldots, b_{nm})\Omega]
\right).
\end{multline*}

Notice that $\phi^E$ is again an isomorphism of $\mathbb{F}_p$-linear spaces and $$\left(\phi^E\right)^{-1} \left(\vec{a}|\vec{b}\right)= \left( \left(\phi^E\right)^{-1}_1 \left(\vec{a}|\vec{b}\right)|\left(\phi^E\right)^{-1}_2 \left(\vec{a}|\vec{b}\right)\right),$$where $\left(\phi^E\right)^{-1}_1$ (respectively, $\left(\phi^E\right)^{-1}_2$) is the first (respectively, second) projection of
$\left(\phi^E\right)^{-1}_1$ over the first (respectively, second) component of the Cartesian product $(\mathbb{F}_p^m)^n \times (\mathbb{F}_p^m)^ n$. One has that $\left(\phi^E\right)^{-1}\left(\vec{a}|\vec{b}\right)$ equals the vector of coordinates of the element $\left(\vec{a}|\vec{b}\right) \in \mathbb{F}_q^{2n}$ in the basis of $\mathbb{F}_q^{2n}$ over $\mathbb{F}_p$ given by $\oplus_{n\; \mbox{\footnotesize{times}}} B \bigoplus \oplus_{n\; \mbox{\footnotesize{times}}} B'$.

Keeping the above notation, it is easy to deduce the following result in \cite{ashikhmin00}.

\begin{proposition}\label{prop:convert}
The following statements hold:
\begin{description}
  \item[a)] Let $x, y \in \mathbb{F}_q$, then
  \[
  \trqp (xy) = \left(\phi^{-1}_1 (x,y)\right) \cdot \left(\phi^{-1}_2 (x,y)\right),
  \]
  where $\cdot$ denotes the Euclidean product in $\mathbb{F}_p^m$.
  \item[b)] Let $\left(\vec{a}|\vec{b}\right), \left(\vec{a'}|\vec{b'}\right) \in \mathbb{F}_q^{2n}$, then
$$
 \left(\vec{a}|\vec{b}\right) \cdot_{st} \left(\vec{a'}|\vec{b'}\right)  
 =
 \left[
 \left(\phi^E\right)^{-1}_1 \left(\vec{a}|\vec{b}\right) | \left(\phi^E\right)^{-1}_2 \left(\vec{a}|\vec{b}\right)
 \right] \cdot_s \left[
 \left(\phi^E\right)^{-1}_1 \left(\vec{a'}|\vec{b'}\right) | \left(\phi^E\right)^{-1}_2 \left(\vec{a'}|\vec{b'}\right)
 \right],
$$
where $\cdot_s$ denotes the symplectic form in $\mathbb{F}_p^{2mn}$.
\end{description}
\end{proposition}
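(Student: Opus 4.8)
The plan is to establish (a) by a direct expansion and then deduce (b) by applying (a) coordinatewise and summing over the $n$ positions.

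For (a), I would write $x$ and $y$ in the two bases that define $\phi$: set $x = \sum_{i=1}^m u_i \gamma_i$ with $(u_1,\ldots,u_m) = \phi^{-1}_1(x,y)$, and $y = \sum_{j=1}^m v_j \omega_j \gamma_j$ with $(v_1,\ldots,v_m) = \phi^{-1}_2(x,y)$, since by construction $\phi^{-1}_1(x,y)$ records the coordinates of $x$ in $B$ while $\phi^{-1}_2(x,y)$ records those of $y$ in $B'=\{\omega_i\gamma_i\}_{i=1}^m$. Expanding the product and using the $\mathbb{F}_p$-linearity of $\trqp$ (note $u_i, v_j, \omega_j \in \mathbb{F}_p$) gives $\trqp(xy) = \sum_{i,j} u_i v_j \omega_j \trqp(\gamma_i\gamma_j) = \sum_{i,j} u_i v_j \omega_j M_{ij}$. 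The decisive simplification comes from the two defining features of $M$ and $\Omega$: because $M$ is diagonal only the terms with $i=j$ survive, and because $\Omega = M^{-1}$ is the diagonal matrix with entries $\omega_i$ one has $\omega_i M_{ii} = 1$ for every $i$. Hence the double sum collapses to $\sum_{i=1}^m u_i v_i = \phi^{-1}_1(x,y)\cdot\phi^{-1}_2(x,y)$, which is exactly (a).

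For (b), I would first unwind the definition of $\phi^E$ to observe that, writing $\vec{a}=(a_1,\ldots,a_n)$ and $\vec{b}=(b_1,\ldots,b_n)$, the block $(\phi^E)^{-1}_1(\vec{a}|\vec{b})$ is the concatenation over $k$ of the coordinate vectors of $a_k$ in $B$, and $(\phi^E)^{-1}_2(\vec{a}|\vec{b})$ is the concatenation over $k$ of the coordinate vectors of $b_k$ in $B'$; that is, these blocks string together the vectors $\phi^{-1}_1(a_k,b_k)$ and $\phi^{-1}_2(a_k,b_k)$ respectively. Since the trace-symplectic form expands, again by $\mathbb{F}_p$-linearity of the trace, as $\sum_{k=1}^n \left(\trqp(a_k b'_k) - \trqp(a'_k b_k)\right)$, I would apply (a) termwise to obtain $\trqp(a_k b'_k) = \phi^{-1}_1(a_k,b'_k)\cdot\phi^{-1}_2(a_k,b'_k)$ and the analogous identity for $a'_k b_k$. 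Summing over $k$ identifies the first group of terms with the Euclidean product of $(\phi^E)^{-1}_1(\vec{a}|\vec{b})$ against $(\phi^E)^{-1}_2(\vec{a'}|\vec{b'})$ and the second group with the crossed product $(\phi^E)^{-1}_1(\vec{a'}|\vec{b'})$ against $(\phi^E)^{-1}_2(\vec{a}|\vec{b})$, whose difference is precisely the symplectic form $\cdot_s$ on $\mathbb{F}_p^{2mn}$ evaluated at the two coordinate vectors.

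The computations are routine; the only point requiring care is the bookkeeping of which basis, $B$ or $B'$, each coordinate block refers to, so that the first and second arguments of $\phi^{-1}$ are matched consistently against the two factors of each product $\trqp(a_k b'_k)$ and $\trqp(a'_k b_k)$. The single genuinely algebraic input is the pair of identities $M_{ij}=0$ for $i\neq j$ and $\omega_i M_{ii}=1$, both immediate from the choice of $B$ as a trace-orthogonal basis together with the definition $\Omega = M^{-1}$.
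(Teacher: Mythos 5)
Your argument is correct, and it supplies exactly the verification that the paper omits: the authors state only that the proposition ``is easy to deduce'' from the construction and the reference \cite{ashikhmin00}, giving no proof of their own. Your part (a) isolates the two facts that make the computation work ($M_{ij}=0$ for $i\neq j$ and $\omega_i M_{ii}=1$, from trace-orthogonality of $B$ and $\Omega=M^{-1}$), and your part (b) correctly handles the one delicate point, namely that $\phi^{-1}_1(a_k,b'_k)$ depends only on $a_k$ and $\phi^{-1}_2(a_k,b'_k)$ only on $b'_k$, so the termwise application of (a) reassembles into the blocks of $(\phi^E)^{-1}(\vec{a}|\vec{b})$ and $(\phi^E)^{-1}(\vec{a'}|\vec{b'})$ as required.
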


Our purpose in this section is to determine the optimal required number of pairs of maximally entangled states of  the EAQECC over an arbitrary finite field $\mathbb{F}_q$ that can be constructed from an $\mathbb{F}_q$-linear code $C \subseteq \mathbb{F}_q^{2n}$ with dimension $n-k$. Assume that $(H_X|H_Z)$ is an $(n-k)\times 2n$ generator matrix of $C$. The case when $m=1$ (i.e., $q$ is prime) is known (see \cite{wilde08,schin}) and the corresponding result is the following:

\begin{theorem}\label{thm:fp}
Let $C \subseteq \mathbb{F}_p^{2n}$ be an $(n-k)$-dimensional $\mathbb{F}_p$-linear space and $H = (H_X|H_Z)$ an $(n-k) \times 2n$ matrix  whose row space is $C$. Let $C' \subseteq \mathbb{F}_p^{2(n+c)}$ be an $\mathbb{F}_p$-linear
 space such that the projection of $C'$ to the $1, 2, \ldots, n, n+c+1, n+c+2, \ldots, 2n+c$-th coordinates is equal to $C$ and $C' \subseteq (C')^\ps$, where $c$ is the minimum required number of maximally entangled quantum states in $\mathbb{C}^p \otimes \mathbb{C}^p$. Then, $$2c = \mathrm{rank}\left(H_X H_Z^T - H_Z H_X^T\right).$$
The encoding quantum circuit is constructed from $C'$, and it encodes $k+c$ logical qudits in $\mathbb{C}^p \otimes \cdots (k+c\; \mbox{times}) \cdots \otimes \mathbb{C}^p$ into $n$ physical qudits using $c$ maximally entangled pairs. The minimum distance  is $d:= d_s \left(C^\ps \setminus (C\cap C^\ps) \right)$,
  where
\[
d_s\left(C^\ps \setminus (C\cap C^\ps)\right) = \min\left\{ \mathrm{swt}\left(\vec{a}|\vec{b}\right) \mid \left(\vec{a}|\vec{b}\right) \in C^\ps \setminus (C\cap C^\ps)\right\}.
\]
In sum,  $C$ provides an $[[n,k+c,d;c]]_p$ EAQECC over the field $\mathbb{F}_p$.
\end{theorem}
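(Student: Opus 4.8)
The plan is to identify the skew-symmetric matrix $S := H_X H_Z^T - H_Z H_X^T$ with the Gram matrix of the symplectic form restricted to $C$, and then to show that the optimal number of entanglement coordinates equals $\tfrac{1}{2}\mathrm{rank}(S)$. First I would record that, writing $\vec{h}_i = (H_X^{(i)} \mid H_Z^{(i)})$ for the $i$-th row of $H$, the $(i,j)$-entry of $S$ is $H_X^{(i)}\cdot H_Z^{(j)} - H_X^{(j)}\cdot H_Z^{(i)} = \vec{h}_i \cdot_s \vec{h}_j$; hence $S$ represents $\cdot_s|_C$ in the basis given by the rows of $H$. Since $\vec{h}_i \cdot_s \vec{h}_i = 0$ the form is alternating, so $\mathrm{rank}(S)$ is even; write $\mathrm{rank}(S) = 2c'$. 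The radical of $\cdot_s|_C$ is exactly $C \cap C^\ps$, whence $\mathrm{rank}(S) = (n-k) - \dim_{\mathbb{F}_p}(C\cap C^\ps)$, a quantity independent of the choice of $H$.

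To prove that $c'$ entanglement pairs suffice I would put $S$ into symplectic normal form, choosing a basis $e_1,f_1,\ldots,e_{c'},f_{c'},u_1,\ldots,u_r$ of $C$ (with $r=(n-k)-2c'$) for which $e_i\cdot_s f_i = 1$, the vectors $u_1,\ldots,u_r$ span $C\cap C^\ps$, and all other symplectic products among basis vectors vanish. Using the entanglement coordinates $n+1,\ldots,n+c'$ on the $X$-side and $2n+c'+1,\ldots,2n+2c'$ on the $Z$-side, I lift $e_i$ by inserting a $1$ in coordinate $n+i$, lift $f_i$ by inserting a $-1$ in coordinate $2n+c'+i$, and lift each $u_j$ by padding with zeros. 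Because the ambient symplectic form on $\mathbb{F}_p^{2(n+c')}$ is the orthogonal direct sum of the form on the original $2n$ coordinates and the form on the $2c'$ entanglement coordinates, the inserted $-1$ cancels precisely the term $e_i\cdot_s f_i = 1$, while all remaining products stay zero; the lifted generators are thus mutually orthogonal and self-orthogonal. The code $C'$ they span satisfies $C' \subseteq (C')^\ps$ and projects onto $C$, so the minimal number of pairs is at most $c'$.

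For the reverse inequality, let $C'$ be an arbitrary self-orthogonal code in $\mathbb{F}_p^{2(n+c)}$ (with $c$ entanglement pairs) projecting onto $C$, and choose an $\mathbb{F}_p$-linear section $s\colon C \to C'$ of this surjection, writing $s(v) = (v \mid L(v))$ with $L(v)$ the entanglement part. The same orthogonal splitting gives, for all $v,w \in C$,
\[
0 = s(v)\cdot_s s(w) = v\cdot_s w + L(v)\cdot_s L(w),
\]
so $\cdot_s|_C$ is the pullback under $L$ of the symplectic form on the $2c$-dimensional entanglement space. Consequently $\mathrm{rank}(S) = \mathrm{rank}(\cdot_s|_C) \le 2c$, i.e.\ $c' \le c$; since this holds for every valid lift while the construction realizes $c'$ pairs, the optimal number is $c'$ and $2c = \mathrm{rank}(S)$.

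The remaining quantum data follow from the standard stabilizer correspondence applied to the self-orthogonal code $C' \subseteq (C')^\ps$ on $n+c$ qudits: since $\dim_{\mathbb{F}_p} C' = n-k$, the associated code encodes $(n+c)-(n-k) = k+c$ logical qudits, and the usual coset argument identifies its minimum distance with $d_s\bigl(C^\ps \setminus (C\cap C^\ps)\bigr)$, the entanglement coordinates contributing nothing to the relevant symplectic weights. I expect the main obstacle to be the two-sided matching $c = c'$ --- controlling the optimal number of entanglement coordinates from both above (the explicit construction) and below (the rank bound) simultaneously --- rather than the bookkeeping for the parameters; for $m=1$ this matching is precisely the content of \cite{wilde08,schin}.
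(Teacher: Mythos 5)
Your proposal is correct in substance, but note that the paper does not actually prove Theorem \ref{thm:fp}: it imports it from \cite{wilde08,schin}, and the only related argument it gives is Proposition \ref{propdim}, which establishes $\mathrm{rank}(H_XH_Z^T-H_ZH_X^T)=\dim_{\mathbb{F}_p}C-\dim_{\mathbb{F}_p}(C\cap C^\ps)$ by rank--nullity for the map $f(\vec{a}|\vec{b})=\vec{a}H_Z^T-\vec{b}H_X^T$. Your identification of $S=H_XH_Z^T-H_ZH_X^T$ as the Gram matrix of $\cdot_s|_C$ in the basis of rows of $H$ yields the same identity more structurally (the radical of the restricted form is $C\cap C^\ps$), and on top of that you supply the two halves that the paper delegates entirely to the references: the explicit self-orthogonal lift $C'$ built from a symplectic (Witt) basis of $C$ modulo its radical, and the matching lower bound $\mathrm{rank}(S)\le 2c$ obtained by pulling back the entanglement-block form along a linear section. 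Both halves check out; the key point in each is the orthogonal splitting of the ambient symplectic form into the main block (coordinates $1,\ldots,n$ paired with $n+c+1,\ldots,2n+c$) and the entanglement block, which makes the inserted $\pm 1$ entries cancel $e_i\cdot_s f_i$ in the construction and gives $v\cdot_s w=-L(v)\cdot_s L(w)$ in the lower bound. The one place you are knowingly leaning on a black box is the last paragraph: that a self-orthogonal $C'$ on $n+c$ coordinate pairs yields an entanglement-assisted code with $k+c$ logical qudits whose distance is $d_s\left(C^\ps\setminus(C\cap C^\ps)\right)$ computed on the first $n$ positions only is the entanglement-assisted stabilizer formalism of \cite{wilde08} together with \cite{ketkar06}, not something your argument re-derives --- but the paper treats this layer exactly the same way, so the division of labor is consistent with the source.
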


Theorem \ref{thm:fp} states that the required number of maximally entangled quantum states is given by the rank of the matrix $H_X H_Z^T - H_Z H_X^T$. Our next result shows that, even in the case of codes over an arbitrary finite field  $\mathbb{F}_q$, the above number depends only on the code $C$ and its symplectic dual.

\begin{proposition}\label{propdim}
  Let $C \subseteq \mathbb{F}_q^{2n}$ be a linear code over $\mathbb{F}_q$ and $(H_X|H_Z)$ its $(n-k)\times 2n$ generator matrix. Then,
  \[
  \mathrm{rank}\left(H_X H_Z^T - H_Z H_X^T\right) =
  \dim_{\mathbb{F}_q} C - \dim_{\mathbb{F}_q} (C \cap C^\ps).
  \]
\end{proposition}
\begin{proof}
Consider the  $\mathbb{F}_q$-linear map $f : \mathbb{F}_q^{2n} \rightarrow \mathbb{F}_q^{n-k}$ defined by $f\left(\vec{a}|\vec{b}\right) = \vec{a} H_Z^T - \vec{b} H_X^T$. Set $\mathrm{row}(H_X|H_Z)$ the row space of
  the matrix $(H_X|H_Z)$. Then we have
\begin{eqnarray*}
\mathrm{rank}\left(H_X H_Z^T - H_Z H_X^T\right) &=& \dim_{\mathbb{F}_q} f\left(\mathrm{row}(H_X|H_Z)\right)\\
&=& \dim_{\mathbb{F}_q} C - \dim_{\mathbb{F}_q} C \cap \ker(f)\\ &=& \dim_{\mathbb{F}_q} C - \dim_{\mathbb{F}_q} (C \cap C^\ps),
\end{eqnarray*}
which concludes the proof.
\end{proof}

Next, with the help of the above proposition, we prove that Theorem \ref{thm:fp} can be extended to codes over any finite field $\mathbb{F}_q$.

\begin{theorem}\label{thm:fq}
Let $C \subseteq \mathbb{F}_q^{2n}$ be an $(n-k)$-dimensional $\mathbb{F}_q$-linear space  and $H = (H_X|H_Z)$ a matrix  whose row space is $C$. Let $C' \subseteq \mathbb{F}_q^{2(n+c)}$ be an $\mathbb{F}_q$-linear
 space such that its projection to the coordinates $1, 2, \ldots, n, n+c+1, n+c+2, \ldots, 2n+c$ equals $C$ and $C' \subseteq (C')^\ps$, where $c$ is the minimum required number of maximally entangled quantum states in $\mathbb{C}^q \otimes \mathbb{C}^q$. Then, $$2c = \mathrm{rank}\left(H_X H_Z^T - H_Z H_X^T\right)  = \dim_{\mathbb{F}_q} C - \dim_{\mathbb{F}_q} \left(C \cap C^\ps\right).$$
The encoding quantum circuit is constructed from $C'$, and it encodes $k+c$ logical qudits in $\mathbb{C}^q \otimes \cdots (k+c\; \mbox{times}) \cdots \otimes \mathbb{C}^q$ into $n$ physical qudits using $c$ maximally entangled pairs. The minimum distance is $d:= d_s \left(C^\ps \setminus (C\cap C^\ps) \right)$,
where $d_s$ is defined as in Theorem \ref{thm:fp}. In sum,  $C$ provides an $[[n,k+c,d;c]]_q$ EAQECC over the field $\mathbb{F}_q$.
\end{theorem}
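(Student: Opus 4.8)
**The plan is to reduce Theorem \ref{thm:fq} over the arbitrary field $\mathbb{F}_q$ to the already-established prime-field case Theorem \ref{thm:fp}, using the isomorphism $\phi^E$ and the compatibility Proposition \ref{prop:convert} as the bridge.** The key observation is that $\phi^E : \mathbb{F}_p^{2mn} \to \mathbb{F}_q^{2n}$ is an $\mathbb{F}_p$-linear isomorphism under which, by Proposition \ref{prop:convert}(b), the trace-symplectic form on $\mathbb{F}_q^{2n}$ pulls back to the symplectic form on $\mathbb{F}_p^{2mn}$ (after the coordinate reshuffling $[\,\cdot_1\,|\,\cdot_2\,]$). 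So I would set $\tilde{C} := (\phi^E)^{-1}(C) \subseteq \mathbb{F}_p^{2mn}$, regarded as an $\mathbb{F}_p$-linear code, and apply Theorem \ref{thm:fp} to $\tilde{C}$. The prime-field theorem then produces an EAQECC from $\tilde{C}$ with a number of ebits $\tilde{c}$ satisfying $2\tilde{c} = \dim_{\mathbb{F}_p}\tilde{C} - \dim_{\mathbb{F}_p}(\tilde{C} \cap \tilde{C}^{\perp_s})$, where here $\perp_s$ is the symplectic dual over $\mathbb{F}_p$.

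First I would verify that the form-compatibility in Proposition \ref{prop:convert}(b) guarantees $(\phi^E)^{-1}(C^{\perp_{ts}}) = \tilde{C}^{\perp_s}$; that is, the $\perp_{ts}$-dual in $\mathbb{F}_q^{2n}$ corresponds exactly to the $\perp_s$-dual in $\mathbb{F}_p^{2mn}$. This is immediate from the fact that $\phi^E$ is a bijective isometry between the two bilinear forms: a vector is $\perp_{ts}$-orthogonal to all of $C$ iff its preimage is $\perp_s$-orthogonal to all of $\tilde{C}$. Consequently $(\phi^E)^{-1}(C \cap C^{\perp_{ts}}) = \tilde{C} \cap \tilde{C}^{\perp_s}$, and since $\phi^E$ is an $\mathbb{F}_p$-linear isomorphism it preserves $\mathbb{F}_p$-dimensions, giving $\dim_{\mathbb{F}_p}\tilde{C} = m\,\dim_{\mathbb{F}_q} C$ and similarly for the intersection. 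Combining with Proposition \ref{propdim} (which identifies the rank over $\mathbb{F}_q$ with $\dim_{\mathbb{F}_q} C - \dim_{\mathbb{F}_q}(C \cap C^{\perp_s})$) yields
\[
2\tilde{c} = m\left(\dim_{\mathbb{F}_q} C - \dim_{\mathbb{F}_q}(C \cap C^{\perp_s})\right) = m\cdot\mathrm{rank}\left(H_X H_Z^T - H_Z H_X^T\right).
\]

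Next I must reconcile the counting of ebits and the encoding picture: Theorem \ref{thm:fp} encodes into $mn$ physical qudits over $\mathbb{C}^p$, whereas Theorem \ref{thm:fq} claims $n$ physical qudits over $\mathbb{C}^q \cong (\mathbb{C}^p)^{\otimes m}$. The natural identification $\mathbb{C}^q \cong \mathbb{C}^p \otimes \cdots \otimes \mathbb{C}^p$ ($m$ factors) groups the $mn$ $p$-ary physical systems into $n$ $q$-ary systems, and groups the $\mathbb{F}_p$-ebits in blocks so that $c = \tilde{c}/m$, which is consistent with $2c = \mathrm{rank}(H_X H_Z^T - H_Z H_X^T)$ as desired. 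Finally, for the minimum distance, I would check that the symplectic weight $\mathrm{swt}$ over $\mathbb{F}_q$ corresponds under $\phi^E$ to the symplectic weight over $\mathbb{F}_p$ in the grouped sense: a coordinate block of $m$ $p$-ary pairs is nonzero iff the corresponding $q$-ary pair $(a_i,b_i)$ is nonzero, so $d_s$ computed over $\mathbb{F}_q$ matches the distance of the $q$-ary code built from $\tilde{C}$.

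\textbf{The main obstacle} I anticipate is not the algebra of dimensions, which is routine once the isometry is set up, but rather the \emph{grouping/regrouping} of physical qudits and ebits: one must argue carefully that the EAQECC produced by Theorem \ref{thm:fp} from $\tilde{C}$, which a priori lives on $mn$ qubits of dimension $p$ with $\tilde{c}$ ebits, genuinely assembles into a well-defined EAQECC on $n$ qudits of dimension $q$ with exactly $c = \tilde{c}/m$ ebits and the stated parameters $[[n,k+c,d;c]]_q$. In particular one needs $m \mid \tilde c$, which follows from $2\tilde c = 2mc$; and one must confirm that the tensor identification $\mathbb{C}^q \cong (\mathbb{C}^p)^{\otimes m}$ is compatible with the stabilizer structure inherited from $C'$, so that the entanglement-assisted stabilizer formalism over $\mathbb{F}_q$ is recovered intact rather than merely the underlying $\mathbb{F}_p$ one. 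Once this identification is justified, the parameter count $k+c$ logical qudits and $c$ ebits follows directly by dividing the prime-field counts by $m$.
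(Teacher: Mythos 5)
Your overall strategy is exactly the paper's: transport $C$ to $\tilde C=(\phi^E)^{-1}(C)$ over $\mathbb{F}_p$, apply Theorem \ref{thm:fp} there, and regroup the $m$ prime-field systems per coordinate into one $q$-ary system. But there is one genuine missing step, and it is the only piece of the paper's proof that is not pure bookkeeping: you need the equality $C^{\perp_{ts}}=C^{\perp_s}$ for the $\mathbb{F}_q$-linear code $C$. Proposition \ref{prop:convert}(b) tells you that $\phi^E$ carries the $\mathbb{F}_p$-symplectic dual of $\tilde C$ to the \emph{trace}-symplectic dual $C^{\perp_{ts}}$, so what your isometry argument actually yields is $\dim_{\mathbb{F}_p}\bigl(\tilde C\cap\tilde C^{\perp_s}\bigr)=\dim_{\mathbb{F}_p}\bigl(C\cap C^{\perp_{ts}}\bigr)$. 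To connect this to Proposition \ref{propdim} and to $\mathrm{rank}\left(H_XH_Z^T-H_ZH_X^T\right)$ --- both of which concern the $\mathbb{F}_q$-bilinear symplectic dual $C^{\perp_s}$ --- you silently replace $C^{\perp_{ts}}$ by $C^{\perp_s}$. A priori $C^{\perp_{ts}}$ is only an $\mathbb{F}_p$-subspace, so $C\cap C^{\perp_{ts}}$ need not be $\mathbb{F}_q$-linear and its $\mathbb{F}_p$-dimension need not be divisible by $m$; your divisibility claim $m\mid\tilde c$ and the whole factor-of-$m$ bookkeeping rest on this unproved identification. The fix is short and constitutes the first paragraph of the paper's proof: $C^{\perp_s}\subseteq C^{\perp_{ts}}$ since $\trqp(0)=0$, and conversely if $\left(\vec a|\vec b\right)\in C^{\perp_{ts}}$ then for every $\alpha\in\mathbb{F}_q$ and $\left(\vec x|\vec y\right)\in C$ one has $\alpha\left(\vec x|\vec y\right)\in C$ by $\mathbb{F}_q$-linearity, hence $\trqp\bigl(\alpha\,((\vec a|\vec b)\cdot_{s}(\vec x|\vec y))\bigr)=0$ for all $\alpha$, and non-degeneracy of the trace forces $\left(\vec a|\vec b\right)\cdot_{s}\left(\vec x|\vec y\right)=0$. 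With that lemma inserted, the rest of your argument (the dimension count, the regrouping of $mn$ $p$-ary qudits into $n$ $q$-ary ones with $c=\tilde c/m$, and the distance via the block-wise weight correspondence) matches the paper's proof, which is similarly terse on the physical regrouping and defers the distance claim to the stabilizer formalism of the nonbinary setting.
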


\begin{proof}
One has that the inclusion $C^\ps \subseteq C^\pt$ holds since $\trqp(0)=0$. In addition,  $C^\pt \subseteq C^\ps$. Indeed, following \cite{ashikhmin00}, if $\left(\vec{a}|\vec{b}\right) \in C^\pt$, then $\left(\vec{a}|\vec{b}\right) \cdot_{ts} \left(\vec{x}|\vec{y}\right) =0$ for all $\left(\vec{x}|\vec{y}\right) \in C$. Taking into account that $\alpha \left(\vec{x}|\vec{y}\right) \in C$ for any $\alpha \in \mathbb{F}_q$, then $\trqp \left((\vec{a}|\vec{b}) \cdot_{s} \alpha (\vec{x}|\vec{y})\right) =0$ for all $\alpha$. This means that $\trqp\left(\alpha\left( (\vec{a}|\vec{b}) \cdot_{s} (\vec{x}|\vec{y})\right)\right)=0$ for all $\alpha$, which proves $\left(\vec{a}|\vec{b}\right) \cdot_{s} \left(\vec{x}|\vec{y}\right)=0$ and, therefore $\left(\vec{a}|\vec{b}\right) \in C^\ps$.

Now, using the same notation as at the beginning of this section, consider the code over the field $\mathbb{F}_p$, $C_0:= (\phi^E)^{-1} (C)$. It is clear that $\dim_{\mathbb{F}_p} (C_0) = m(n-k)$, and by Proposition \ref{prop:convert} and  the equality $C^\ps = C^\pt$, we have
\[
\dim_{\mathbb{F}_p} C_0 = m(n-k) = m \dim_{\mathbb{F}_q} C.
\]
Thus,
$$
     \dim_{\mathbb{F}_p} C_0 - \dim_{\mathbb{F}_p} C_0 \cap C_0^\ps\\
    = m \left( \dim_{\mathbb{F}_q} C - \dim_{\mathbb{F}_q} (C \cap C^\ps) \right).
  $$

This shows that, by Theorem \ref{thm:fp}, we have an entanglement-assisted quantum code encoding $m(k+c)$ qudits in $\mathbb{C}^p$ and consuming $mc$ maximally entangled states in $\mathbb{C}^p \otimes \mathbb{C}^p$. Using the map $\phi^E$ and the fact that $C^{\perp_s} = C^{\perp_{ts}}$, we have an entanglement-assisted quantum code encoding $(k+c)$ qudits in $\mathbb{C}^q$ and consuming $c$ maximally entangled states in $\mathbb{C}^q \otimes \mathbb{C}^q$. In fact, one can construct $C'_0 \subseteq \mathbb{F}_p^{2m(n+c)}$ in the same way as constructed $C'$ from $C$ in Theorem \ref{thm:fp}. Applying $\phi^E$ to the code $C'_0$, we get the code $C'$ in the statement with the claimed properties. The minimum distance follows from \cite[Section III]{ketkar06}.
\end{proof}

\subsection{The Hermitian case}
\label{Hermitian}
In this subsection, we specialize the results in Subsection \ref{symplectic} by considering the Hermitian inner product instead of a symplectic form. With the above notation, consider the finite field $\mathbb{F}_{q^2}$ and a normal basis $\{w, w^q\}$ of $\mathbb{F}_{q^2}$ over $\mathbb{F}_{q}$. Fix a positive integer $n$ and, following \cite{ketkar06}, define a trace-alternating form over $\mathbb{F}_{q^2}^n$ as
\[
\vec{x} \cdot_a \vec{y} = \trqp \left( \frac{\vec{x} \cdot \vec{y}^q - \vec{x}^q \cdot \vec{y}}{w^{2q}-w^2} \right),
\]
where $\vec{z}^q$, $\vec{z} \in \mathbb{F}_{q^2}^n$, means the componentwise $q$-power of $\vec{z}$.
The map $\varphi: \mathbb{F}_{q}^{2n} \rightarrow \mathbb{F}_{q^2}^{n}$ given by $\varphi \left(\vec{a}|\vec{b}\right) = w \vec{a} + w^q \vec{b}$ is bijective and isometric because the symplectic and the Hamming weights of $\left(\vec{a}|\vec{b}\right)$ and $\varphi\left(\vec{a}|\vec{b}\right)$ coincide. In addition, for $\left(\vec{a}|\vec{b}\right), \left(\vec{a'}|\vec{b'}\right) \in \mathbb{F}_{q}^{2n}$, it holds that
\[
\left(\vec{a}|\vec{b}\right) \cdot_{ts} \left(\vec{a'}|\vec{b'}\right) = \varphi\left(\vec{a}|\vec{b}\right) \cdot_{a} \varphi\left(\vec{a'}|\vec{b'}\right).
\]

Recall that the Hermitian inner product of two vectors $\vec{x}, \vec{y} \in \mathbb{F}_{q^2}^{n}$ is defined to be $\vec{x} \cdot_h \vec{y} = \vec{x}^q \cdot \vec{y}$, where $\cdot$ means Euclidean product, and that, in \cite{ketkar06}, it is proved that for a  $\mathbb{F}_{q^2}$-linear code $D$,  the dual codes with respect to the products $\cdot_a$ and $\cdot_h$ coincide. With the above ingredients, we are ready to prove the next proposition which will allow us to state and prove our theorem on EAQECCs over arbitrary finite fields by considering Hermitian inner product.

\begin{proposition}\label{propdimh}
  Let $C \subseteq \mathbb{F}_{q^2}^n$ be a code over $\mathbb{F}_{q^2}$ of dimension $(n-k)/2$ for some positive integer $k$. Let $H$ be its generator matrix. Then
\[
  \mathrm{rank}(HH^*) =
  \dim_{\mathbb{F}_{q^2}} C - \dim_{\mathbb{F}_{q^2}} (C \cap C^\ph),
\]
where $H^*$ is the $q$th power of the transpose matrix of $H$.
\end{proposition}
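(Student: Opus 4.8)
The plan is to reduce the Hermitian statement to the symplectic result already established in Proposition~\ref{propdim}, using the isometry $\varphi$ and the identification of the trace-alternating dual with the Hermitian dual. The target quantity is $\mathrm{rank}(HH^*)$, where $H$ is a generator matrix of $C\subseteq\mathbb{F}_{q^2}^n$, and I want to show it equals $\dim_{\mathbb{F}_{q^2}}C-\dim_{\mathbb{F}_{q^2}}(C\cap C^{\ph})$. The most direct route mirrors the proof of Proposition~\ref{propdim}: define the $\mathbb{F}_{q^2}$-linear map $g:\mathbb{F}_{q^2}^n\to\mathbb{F}_{q^2}^{(n-k)/2}$ by $g(\vec{x})=\vec{x}^q H^T$ (equivalently $g(\vec{x})=(H\,(\vec{x})^{*})^T$ up to transpose conventions), so that $g$ encodes the Hermitian pairing against the rows of $H$. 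Then the kernel of $g$ restricted to $C$ is exactly $C\cap C^{\ph}$, since the rows of $H$ span $C$ and $\vec{x}\in C^{\ph}$ precisely when $\vec{x}$ is Hermitian-orthogonal to every row.

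First I would check that $\mathrm{rank}(HH^*)=\dim_{\mathbb{F}_{q^2}}g(C)$. Writing $H$ for the matrix whose rows generate $C$, the image $g(\mathrm{row}(H))$ is spanned by the vectors $H(\vec{r})^{*}$ as $\vec{r}$ ranges over the rows of $H$; collecting these gives precisely the matrix $HH^*$, so its rank is the dimension of $g(C)$. Next, by rank–nullity for the $\mathbb{F}_{q^2}$-linear map $g$ restricted to the $(n-k)/2$-dimensional space $C$,
\[
\dim_{\mathbb{F}_{q^2}}g(C)=\dim_{\mathbb{F}_{q^2}}C-\dim_{\mathbb{F}_{q^2}}\bigl(C\cap\ker g\bigr).
\]
It then remains to identify $C\cap\ker g$ with $C\cap C^{\ph}$. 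This is where I would invoke that $\vec{x}\in\ker g$ iff $\vec{x}^q H^T=\vec{0}$ iff $\vec{x}$ is Hermitian-orthogonal to each row of $H$ iff $\vec{x}\in C^{\ph}$; the semilinearity of the $q$-power conjugation is harmless here because we only need the vanishing of a set of pairings, and $C^{\ph}$ is defined by exactly those vanishing conditions.

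The main subtlety — and the step I would be most careful about — is keeping the Frobenius conjugation bookkeeping consistent, since the Hermitian product $\vec{x}\cdot_h\vec{y}=\vec{x}^q\cdot\vec{y}$ is $\mathbb{F}_{q^2}$-semilinear in its first argument. The map $g$ must be set up so that $\ker g$ coincides with $C^{\ph}$ and not with some twisted variant; concretely I would verify that applying the (bijective, additive) Frobenius $\vec{x}\mapsto\vec{x}^q$ does not change dimensions over $\mathbb{F}_{q^2}$, so that rank–nullity applies verbatim to the $\mathbb{F}_{q^2}$-linear map $\vec{x}\mapsto(\vec{x}H^*)$ obtained after absorbing the conjugation. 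An alternative, perhaps cleaner, route would bypass $g$ entirely: transport $C$ through $\varphi^{-1}$ to an $\mathbb{F}_q$-linear code in $\mathbb{F}_q^{2n}$, apply Proposition~\ref{propdim} and the coincidence of the $\cdot_a$ and $\cdot_h$ duals, and track how the symplectic rank relates to $\mathrm{rank}(HH^*)$; but since $\varphi$ is only $\mathbb{F}_q$-linear and not $\mathbb{F}_{q^2}$-linear, the direct rank–nullity argument above is likely the shorter path, provided the conjugation is handled correctly.
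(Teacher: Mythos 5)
Your proposal is correct and follows essentially the same route as the paper: the published proof defines the $\mathbb{F}_{q^2}$-linear map $f(\vec{a})=\vec{a}H^*$, computes $\mathrm{rank}(HH^*)=\dim_{\mathbb{F}_{q^2}} f(\mathrm{row}(H))$, and applies rank--nullity with $\ker f=C^{\ph}$, which is exactly your ``absorbed conjugation'' version of the argument. The only difference is that the paper puts the Frobenius on the matrix from the outset, so the semilinearity issue you carefully flag never arises.
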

\begin{proof}
Define the $\mathbb{F}_{q^2}$-linear map $f : \mathbb{F}_{q^2}^{n} \rightarrow \mathbb{F}_{q^2}^{(n-k)/2}$, given by $f(\vec{a}) =  \vec{a} H^*$. Then,
\begin{eqnarray*}
\mathrm{rank}(H H^*) &=& \dim_{\mathbb{F}_{q^2}}  f(\mathrm{row}(H))\\
&=& \dim_{\mathbb{F}_{q^2}} C - \dim_{\mathbb{F}_{q^2}} (C \cap \ker(f))\\ &=& \dim_{\mathbb{F}_{q^2}} C - \dim_{\mathbb{F}_{q^2}} (C \cap C^\ph).
\end{eqnarray*}
\end{proof}

\begin{theorem}\label{thm:hermitian}
Let $C \subseteq \mathbb{F}_{q^2}^{n}$ be an $(n-k)/2$-dimensional code over $\mathbb{F}_{q^2}$, for suitable integers $n$ and $k$.  Denote by $H$ its generator matrix. Let $C' \subseteq \mathbb{F}_{q^2}^{(n+c)}$ be an $\mathbb{F}_{q^2}$-linear space whose projection to the coordinates $1, 2, \ldots, n$ equals $C$ and satisfies $C' \subseteq (C')^{\perp_h}$, where $c$ is the minimum required number of maximally entangled quantum states in $\mathbb{C}^q \otimes \mathbb{C}^q$. Then, $$c = \mathrm{rank} \left(H H^* \right)  = \dim_{\mathbb{F}_{q^2}} C - \dim_{\mathbb{F}_{q^2}} \left(C \cap C^{\perp_h}\right).$$
The encoding quantum circuit is constructed from $C'$, and it encodes $k+c$ logical qudits in $\mathbb{C}^q \otimes \cdots (k+c\; \mbox{times}) \cdots \otimes \mathbb{C}^q$ into $n$ physical qudits using $c$ maximally entangled pairs. The minimum distance is $d:= d_H \left(C^{\perp_h} \setminus (C\cap C^{\perp_h}) \right)$,
where $d_H$ is defined as the minimum Hamming weight of the vectors in the set $C^{\perp_h} \setminus \left(C\cap C^{\perp_h} \right)$. In sum,  $C$ provides an $[[n,k+c,d;c]]_q$ EAQECC over the field $\mathbb{F}_q$.
\end{theorem}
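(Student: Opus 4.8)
The plan is to reduce the Hermitian case to the already-established symplectic theorem via the isometry $\varphi$, exactly mirroring how Theorem~\ref{thm:fq} was deduced from Theorem~\ref{thm:fp}. First I would use Proposition~\ref{propdimh} to rewrite the quantity $\mathrm{rank}(HH^*)$ as $\dim_{\mathbb{F}_{q^2}} C - \dim_{\mathbb{F}_{q^2}}(C \cap C^{\perp_h})$, so that the claimed formula for $c$ becomes a statement purely about the code $C$ and its Hermitian dual, independent of the choice of generator matrix $H$.

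The core step is to transport the code $C \subseteq \mathbb{F}_{q^2}^n$ through the bijection $\varphi$. Since $\varphi$ is an isomorphism of $\mathbb{F}_q$-linear spaces, the preimage $\widetilde{C} := \varphi^{-1}(C) \subseteq \mathbb{F}_q^{2n}$ is an $\mathbb{F}_q$-linear code of dimension $2 \cdot \frac{n-k}{2} = n-k$ over $\mathbb{F}_q$. The key identity $\left(\vec{a}|\vec{b}\right) \cdot_{ts} \left(\vec{a'}|\vec{b'}\right) = \varphi(\vec{a}|\vec{b}) \cdot_a \varphi(\vec{a'}|\vec{b'})$ shows that $\varphi$ carries the trace-symplectic form on $\mathbb{F}_q^{2n}$ to the trace-alternating form on $\mathbb{F}_{q^2}^n$; hence $\varphi^{-1}(C^{\perp_a}) = \widetilde{C}^{\perp_{ts}}$. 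Because \cite{ketkar06} identifies the $\cdot_a$-dual and the $\cdot_h$-dual of an $\mathbb{F}_{q^2}$-linear code, we get $\varphi^{-1}(C^{\perp_h}) = \widetilde{C}^{\perp_{ts}}$, and since $\varphi$ is a bijection respecting intersections, $\dim_{\mathbb{F}_q}(\widetilde{C} \cap \widetilde{C}^{\perp_{ts}}) = 2\dim_{\mathbb{F}_{q^2}}(C \cap C^{\perp_h})$. Combining these, Theorem~\ref{thm:fq} applied to $\widetilde{C}$ produces an EAQECC consuming $\tilde c$ maximally entangled pairs with $2\tilde c = \dim_{\mathbb{F}_q}\widetilde{C} - \dim_{\mathbb{F}_q}(\widetilde{C}\cap\widetilde{C}^{\perp_{ts}}) = 2\left(\dim_{\mathbb{F}_{q^2}}C - \dim_{\mathbb{F}_{q^2}}(C\cap C^{\perp_h})\right)$, so $\tilde c = c$ equals the desired rank, and the dimension count $k+c$ for the logical qudits carries over verbatim.

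For the constructive part, I would take the code $C' \subseteq \mathbb{F}_{q^2}^{n+c}$ obtained by adjoining $c$ suitable coordinates to make $C' \subseteq (C')^{\perp_h}$ self-orthogonal, and verify that $\varphi^{-1}(C')$ (after the appropriate coordinate identification over $\mathbb{F}_q^{2(n+c)}$) is exactly a symplectic self-orthogonal code of the type whose existence is guaranteed by Theorem~\ref{thm:fq}; the encoding circuit is then inherited. The minimum distance statement follows because $\varphi$ preserves Hamming weight (it is an isometry), so the symplectic weight spectrum of $\widetilde{C}^{\perp_{ts}} \setminus (\widetilde{C}\cap\widetilde{C}^{\perp_{ts}})$ matches the Hamming weight spectrum of $C^{\perp_h}\setminus(C\cap C^{\perp_h})$, yielding $d = d_H\left(C^{\perp_h}\setminus(C\cap C^{\perp_h})\right)$.

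The main obstacle I anticipate is bookkeeping rather than conceptual: one must be careful that the coordinate-padding construction of $C'$ in the Hermitian setting corresponds correctly, under $\varphi$, to the symplectic padding of $\widetilde{C}$ into $\mathbb{F}_q^{2(n+c)}$ on the coordinate blocks $1,\dots,n$ and $n+c+1,\dots,2n+c$ prescribed in Theorem~\ref{thm:fq}. Checking that the $\mathbb{F}_q$-dimension accounting stays consistent with the $\mathbb{F}_{q^2}$-dimensions (the factor of two appearing symmetrically in $\dim C$, $\dim(C\cap C^{\perp_h})$ and in $\tilde c$) is where the argument could go wrong if the isometry $\varphi$ were not applied uniformly, so I would state the intersection and duality transport identities explicitly before invoking Theorem~\ref{thm:fq}.
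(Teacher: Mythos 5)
Your proposal is correct and follows essentially the same route as the paper: both reduce the Hermitian case to the symplectic Theorem~\ref{thm:fq} by pulling $C$ back through the isometry $\varphi$ and using the coincidence of the trace-alternating and Hermitian duals. The only (minor) difference is that you obtain $2c = 2\,\mathrm{rank}(HH^*)$ via Proposition~\ref{propdimh} together with the transport of duals and intersections under $\varphi$, whereas the paper computes the Gram matrix $\mathcal{J}$ of the trace-alternating form explicitly and shows $\mathrm{rank}(\mathcal{J}) = 2\,\mathrm{rank}(HH^*)$ by elementary matrix operations.
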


\begin{proof}
With the above notation, consider the code $C'$ in $\mathbb{F}_{q^2}^n$ of dimension $n-k$ whose generator matrix is
\[
\mathcal{H} = \left(
                \begin{array}{c}
                  \omega H \\
                  \omega^q H \\
                \end{array}
              \right)
\]
and set $C_0 = \varphi^{-1} (C')$ the corresponding code in  $\mathbb{F}_{q}^{2n}$. Since $\varphi$ is an isometry, to obtain the value $2c$ corresponding to $C_0$, it suffices to compute the rank of the matrix given by the form $\cdot_a$ which is $\mathcal{J}= \mathrm{tr}_{q^2/q} \left((H H^* - H^q H^T)/ \lambda \right)$, where $\lambda = \omega^{2q} - \omega^2$ and $\mathrm{tr}_{q^2/q}$ the trace map from $\mathbb{F}_{q^2}$ to $\mathbb{F}_{q}$. Now, setting
\[
\mathcal{Z}= \left(
               \begin{array}{cc}
                 \omega^{q+1} & \omega^2 \\
                 \omega^{2q} & \omega^{q+1} \\
               \end{array}
             \right),
\]
it holds that $\mathcal{J}= (2/\lambda) \left(ZH H^*- Z^T H^q H^T \right)$. Performing elementary operations, we get that $ \mathrm{rank} (\mathcal{J}) = 2 \: \mathrm{rank} \left( H H^*\right)$. Finally, by our previous considerations, $\dim_{\mathbb{F}_q} C_0 = n-k$, $\dim_{\mathbb{F}_q} (C_0 \cap C_0^\ps) = 2c$, and
\[d_H \left(C^\ph \setminus C^\ph \cap C\right) = d_s \left(C_0^\ps \setminus (C^\ps \cap C_0) \right),
 \]
which proves our statement by Theorem \ref{thm:fq}.
\end{proof}

The following corollary is an immediate consequence of the above result.

\begin{corollary}
Let $C$ be an $[n,k,d]_{q^2}$ linear code over $\mathbb{F}_{q^2}$ and set $H$ a parity check matrix of $C$. Then, there exist an $[[n,2k-n+c,d;c]]_q$ EAQECC where $c= \mathrm{rank}(H H^*)$, $H^*$ being the $q$th power of the transpose matrix $H^T$.
\end{corollary}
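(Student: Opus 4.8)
```latex
The goal is to deduce the Corollary from Theorem \ref{thm:hermitian}, so the plan is
to identify the right input code and read off the parameters. First I would observe
that the statement of Theorem \ref{thm:hermitian} takes as input an
$(n-k)/2$-dimensional code in $\mathbb{F}_{q^2}^n$ together with a generator matrix,
whereas the Corollary is phrased in terms of an $[n,k,d]_{q^2}$ code $C$ and its
\emph{parity check} matrix $H$. The natural move is therefore to feed the theorem the
code $D := C^{\perp_h}$, the Hermitian dual of $C$, rather than $C$ itself: a parity
check matrix of $C$ is precisely a generator matrix of $D$, so the matrix $H$ in the
Corollary is the generator matrix required by the theorem.

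With this identification the dimension bookkeeping is routine. Since $C$ has dimension
$k$, its Hermitian dual $D = C^{\perp_h}$ has dimension $n-k$. To match the hypothesis
of Theorem \ref{thm:hermitian}, which asks for a code of dimension $(n-\tilde{k})/2$,
I would set the theorem's parameter $\tilde{k}$ by solving $(n-\tilde{k})/2 = n-k$,
giving $\tilde{k} = 2k - n$. The theorem then produces an EAQECC encoding
$\tilde{k} + c$ logical qudits, and substituting $\tilde{k} = 2k-n$ yields the claimed
dimension $2k - n + c$. For the entanglement parameter, Theorem \ref{thm:hermitian}
gives $c = \mathrm{rank}(H H^*)$ where $H$ is the generator matrix of the input code;
since that generator matrix is exactly the parity check matrix $H$ of $C$, this is
the same $c = \mathrm{rank}(H H^*)$ stated in the Corollary, with $H^*$ the $q$th power
of $H^T$ as in Proposition \ref{propdimh}.

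The length $n$ and the minimum distance $d$ require a short comment. The length is
preserved because $D$ lives in $\mathbb{F}_{q^2}^n$, so the physical-qudit count is $n$
as in the theorem. For the distance, Theorem \ref{thm:hermitian} gives
$d = d_H(D^{\perp_h} \setminus (D \cap D^{\perp_h}))$; using $D = C^{\perp_h}$ and the
Hermitian biduality $D^{\perp_h} = (C^{\perp_h})^{\perp_h} = C$, this minimum Hamming
weight is taken over a subset of $C$, which is consistent with writing the distance as
$d$, the minimum distance of the $[n,k,d]_{q^2}$ code $C$ in the standard (pure,
non-assisted) situation where $D \subseteq D^{\perp_h}$.

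The only genuine subtlety, and the step I would check most carefully, is that
Theorem \ref{thm:hermitian} presupposes the existence of the ambient extension $C'$
with $C' \subseteq (C')^{\perp_h}$; the Corollary should be read as asserting existence
of an EAQECC with the stated parameters precisely because this construction from
Theorem \ref{thm:hermitian} applies to any input code. Thus no further work is needed
beyond substituting $D = C^{\perp_h}$ into the theorem and performing the dimension
substitution $\tilde{k} = 2k-n$; the content of the Corollary is a direct translation
of Theorem \ref{thm:hermitian} from a code to its dual.
```
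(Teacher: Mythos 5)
Your argument is correct and coincides with what the paper intends: the corollary is stated there without proof as an ``immediate consequence'' of Theorem~\ref{thm:hermitian}, obtained exactly as you do by feeding the theorem the $(n-k)$-dimensional code generated by the parity check matrix $H$ and solving $(n-\tilde{k})/2=n-k$ to get $\tilde{k}=2k-n$. The only point worth polishing is the distance: the theorem yields $d_H\left(C\setminus (C\cap C^{\perp_h})\right)\geq d$ because the minimum is taken over a subset of $C\setminus\{\vec{0}\}$, so the $d$ in the corollary is to be read as a lower bound, as you essentially observe.
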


\subsection{The Euclidean case}
In this section we will show that EAQECCs over any finite field $\mathbb{F}_{q}$ can be obtained through a CSS construction, where the Euclidean inner product is considered, and carried out with two $\mathbb{F}_{q}$-linear codes $C_1$ and $C_2$ of length $n$.  Assume that $C_1$ (respectively, $C_2$) has dimension $k_1$ and generator matrix  $H_1$ (respectively, $k_2$ and $H_2$). Before stating our result, we give the following proposition which will be used in its proof.

\begin{proposition}\label{propdimE}
With the above notations, it holds that
\begin{equation}
\label{eq1}
 \mathrm{rank}(H_1H_2^T) =  \dim_{\mathbb{F}_{q}} C_1 - \dim_{\mathbb{F}_{q}} (C_1 \cap C_2^\perp),
\end{equation}
and
\begin{equation}
\label{eq2}
\mathrm{rank}(H_2H_1^T) =  \dim_{\mathbb{F}_{q}} C_2 - \dim_{\mathbb{F}_{q}} (C_2 \cap C_1^\perp),
\end{equation}
where $\perp$ means Euclidean dual.
\end{proposition}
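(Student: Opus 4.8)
The plan is to imitate the rank-nullity arguments already used for Propositions \ref{propdim} and \ref{propdimh}, now with the symmetric Euclidean form playing the role of the symplectic or trace-alternating form. The key observation is that the rows of the product matrix $H_1 H_2^T$ are precisely the images of the rows of $H_1$ under the $\mathbb{F}_q$-linear map that pairs a vector against the generators of $C_2$ via the Euclidean product.

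First I would establish \eqref{eq1}. Define the $\mathbb{F}_q$-linear map $f : \mathbb{F}_q^n \rightarrow \mathbb{F}_q^{k_2}$ by $f(\vec{a}) = \vec{a} H_2^T$; its $j$-th coordinate is the Euclidean product of $\vec{a}$ with the $j$-th row of $H_2$. Since the rows of $H_1 H_2^T$ are exactly $f$ applied to the rows of $H_1$, we have $\mathrm{rank}(H_1 H_2^T) = \dim_{\mathbb{F}_q} f(\mathrm{row}(H_1)) = \dim_{\mathbb{F}_q} f(C_1)$, and by rank-nullity this equals $\dim_{\mathbb{F}_q} C_1 - \dim_{\mathbb{F}_q}(C_1 \cap \ker f)$. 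It then remains to identify $\ker f$: a vector $\vec{a}$ lies in $\ker f$ exactly when it is Euclidean-orthogonal to every generator of $C_2$, i.e. when $\vec{a} \in C_2^\perp$. Hence $C_1 \cap \ker f = C_1 \cap C_2^\perp$, which gives \eqref{eq1}.

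Equation \eqref{eq2} follows by exchanging the roles of $C_1$ and $C_2$: I would apply the same argument to the map $g(\vec{a}) = \vec{a} H_1^T$, whose kernel is $C_1^\perp$, so that $\mathrm{rank}(H_2 H_1^T) = \dim_{\mathbb{F}_q} C_2 - \dim_{\mathbb{F}_q}(C_2 \cap C_1^\perp)$. I do not anticipate any genuine obstacle here, as the argument is entirely parallel to the two preceding propositions, the only structural difference being that the Euclidean form is symmetric. As a consistency check, note that $(H_1 H_2^T)^T = H_2 H_1^T$, so the two left-hand sides have equal rank; comparing \eqref{eq1} and \eqref{eq2} then recovers the symmetric identity $\dim_{\mathbb{F}_q} C_1 - \dim_{\mathbb{F}_q}(C_1 \cap C_2^\perp) = \dim_{\mathbb{F}_q} C_2 - \dim_{\mathbb{F}_q}(C_2 \cap C_1^\perp)$, which will be convenient when the proposition is invoked in the Euclidean CSS construction.
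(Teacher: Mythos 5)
Your proof is correct and follows essentially the same route as the paper: the paper also defines $f(\vec{a})=\vec{a}H_2^T$, applies rank--nullity to $f$ restricted to $C_1$, identifies $\ker f$ with $C_2^\perp$, and obtains \eqref{eq2} analogously via the map given by $H_1^T$. The additional consistency check using $(H_1H_2^T)^T = H_2H_1^T$ is a nice touch but not needed.
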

\begin{proof}
To prove Equality (\ref{eq1}), consider the $\mathbb{F}_q$-linear map $f : \mathbb{F}_{q}^{n} \rightarrow \mathbb{F}_{q}^{k_2}$ defined by the matrix $H_2^T$, that is  $f(\vec{a}) = \vec{a}H_2^T$. Then
\begin{eqnarray*}
\mathrm{rank}(H_1 H_2^T) &=& \dim_{\mathbb{F}_{q}} f(\mathrm{row}(H_1))\\
&=& \dim_{\mathbb{F}_{q}} C_1 - \dim_{\mathbb{F}_{q}} (C_1 \cap \ker(f))\\ &=& \dim_{\mathbb{F}_{q}} C_1 - \dim_{\mathbb{F}_{q}} (C_1 \cap C_2^\perp).
\end{eqnarray*}
Equality (\ref{eq2}) follows analogously from the map given by $H_1^T$.
\end{proof}

Next we state the main result in this section.

\begin{theorem}\label{thm:euclidean}
Let $C_1$ and $C_2$ be two linear codes over $\mathbb{F}_{q}$ included in $\mathbb{F}_q^{n}$ with respective dimensions $k_1$ and $k_2$ and generator matrices $H_1$ and $H_2$. Then, the code $C_0 = C_1 \times C_2 \subseteq \mathbb{F}_q^{2n}$ gives rise to an EAQECC which encodes $n-k_1-k_2 + c$ logical qudits into $n$ physical qudits using the minimum required of maximally entangled pairs $c$, which is
\[
c = \mathrm{rank}(H_1H_2^T) =  \dim_{\mathbb{F}_{q}} C_1 - \dim_{\mathbb{F}_{q}} (C_1 \cap C_2^\perp).
\]
The minimum distance of the entanglement-assisted quantum code is larger than or equal to
\[
d:= \min \left\{ d_H\left(C_1^\perp \setminus (C_2 \cap C_1^\perp)\right), d_H\left(C_2^\perp \setminus  (C_1 \cap C_2^\perp)\right) \right\}.
\]
In sum, one gets an $[[n, n-k_1-k_2+c,d;c]]_q$ EAQECC.
\end{theorem}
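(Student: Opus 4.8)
The plan is to view the product $C_0 = C_1 \times C_2$ as a single $\mathbb{F}_q$-linear code in $\mathbb{F}_q^{2n}$ and feed it to the symplectic machinery of Theorem \ref{thm:fq}. Since a word of $C_0$ is $(\vec{a}|\vec{b})$ with $\vec{a}\in C_1$ and $\vec{b}\in C_2$, a generator matrix $(H_X|H_Z)$ of $C_0$ is the block matrix with $H_X=\left(\begin{smallmatrix}H_1\\0\end{smallmatrix}\right)$ and $H_Z=\left(\begin{smallmatrix}0\\H_2\end{smallmatrix}\right)$, so $\dim_{\mathbb{F}_q}C_0=k_1+k_2$. Writing $k:=n-k_1-k_2$ puts $C_0$ in the format $\dim C_0 = n-k$ required by Theorem \ref{thm:fq}, which will then produce an EAQECC encoding $k+c = n-k_1-k_2+c$ logical qudits into $n$ physical qudits, as claimed.

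First I would compute the entanglement parameter. A block multiplication gives
\[
H_X H_Z^T - H_Z H_X^T = \begin{pmatrix} 0 & H_1 H_2^T \\ -H_2 H_1^T & 0 \end{pmatrix},
\]
and since this matrix is block anti-diagonal its rank is $\mathrm{rank}(H_1 H_2^T)+\mathrm{rank}(H_2 H_1^T)$. Because $H_2 H_1^T=(H_1 H_2^T)^T$, the two summands are equal, so the rank equals $2\,\mathrm{rank}(H_1 H_2^T)$. Theorem \ref{thm:fq} gives $2c=\mathrm{rank}(H_X H_Z^T-H_Z H_X^T)$, whence $c=\mathrm{rank}(H_1 H_2^T)$, and Proposition \ref{propdimE} rewrites this as $c=\dim_{\mathbb{F}_q}C_1-\dim_{\mathbb{F}_q}(C_1\cap C_2^\perp)$.

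For the minimum distance I would first identify the symplectic dual of $C_0$. Requiring $(\vec{x}|\vec{y})\cdot_s(\vec{a}|\vec{b})=\vec{x}\cdot\vec{b}-\vec{a}\cdot\vec{y}=0$ for every $\vec{a}\in C_1$, $\vec{b}\in C_2$, and testing $\vec{a}=\vec{0}$ and $\vec{b}=\vec{0}$ separately, yields $C_0^\ps=C_2^\perp\times C_1^\perp$ and hence $C_0\cap C_0^\ps=(C_1\cap C_2^\perp)\times(C_2\cap C_1^\perp)$. By Theorem \ref{thm:fq} the distance is $d_s\big(C_0^\ps\setminus(C_0\cap C_0^\ps)\big)$. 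Take any $(\vec{x}|\vec{y})$ in that set: then $\vec{x}\in C_2^\perp$, $\vec{y}\in C_1^\perp$, and the pair lies outside $(C_1\cap C_2^\perp)\times(C_2\cap C_1^\perp)$, forcing $\vec{x}\notin C_1$ or $\vec{y}\notin C_2$. Using the elementary bounds $\mathrm{swt}(\vec{x}|\vec{y})\ge\mathrm{wt}(\vec{x})$ and $\mathrm{swt}(\vec{x}|\vec{y})\ge\mathrm{wt}(\vec{y})$, in the first case $\vec{x}\in C_2^\perp\setminus(C_1\cap C_2^\perp)$ gives $\mathrm{swt}(\vec{x}|\vec{y})\ge d_H(C_2^\perp\setminus(C_1\cap C_2^\perp))$, and symmetrically in the second. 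Minimizing over all such words yields the stated lower bound $d\ge\min\{d_H(C_1^\perp\setminus(C_2\cap C_1^\perp)),\,d_H(C_2^\perp\setminus(C_1\cap C_2^\perp))\}$.

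The algebra is routine once $C_0$ is set up; the only delicate point is the distance step, where one must describe the set difference $C_0^\ps\setminus(C_0\cap C_0^\ps)$ correctly and observe that only an inequality is available, because the symplectic weight of $(\vec{x}|\vec{y})$ can strictly exceed the Hamming weight of each block---this is precisely why the theorem asserts $d\ge\cdots$ rather than equality.
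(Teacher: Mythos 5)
Your proof is correct and follows essentially the same route as the paper: reduce to Theorem \ref{thm:fq} by identifying $C_0^\ps = C_2^\perp \times C_1^\perp$, obtain $2c$ from $\mathrm{rank}(H_1H_2^T)=\mathrm{rank}(H_2H_1^T)$ together with Proposition \ref{propdimE}, and bound the symplectic weight on $C_0^\ps\setminus(C_0\cap C_0^\ps)$ blockwise. The only difference is presentational: you compute the rank of the block matrix $H_XH_Z^T-H_ZH_X^T$ explicitly where the paper counts dimensions directly, and you spell out the distance inequality that the paper dispatches with ``by construction.''
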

\begin{proof}
It suffices to notice that $\dim_{\mathbb{F}_{q}} C_0 = k_1 + k_2$, $C_0^\ps = C_2^\perp \times C_1^\perp$, and
  \begin{eqnarray*}
    && \dim_{\mathbb{F}_{q}} C_0 - \dim_{\mathbb{F}_{q}} (C_0^\ps\cap C_0) \\
    &=& \dim_{\mathbb{F}_{q}} \left(C_1 \times C_2\right) -\dim_{\mathbb{F}_{q}} \left((C_2^\perp \cap C_1)  \times (C_1^\perp \cap C_2)\right)\\
    &=& \left( \dim_{\mathbb{F}_{q}} C_1 - \dim_{\mathbb{F}_{q}} (C_2^\perp \cap C_1) \right) + \left(\dim_{\mathbb{F}_{q}} C_2 - \dim_{\mathbb{F}_{q}} (C_1^\perp \cap C_2) \right)\\
    &=& 2c.
  \end{eqnarray*}
  By construction, we have that $$d_s(C_0 \setminus C_0^\ps)
  \geq \min \left\{ d_H\left(C_1^\perp \setminus (C_2 \cap C_1^\perp)\right),
  d_H\left(C_2^\perp \setminus (C_1 \cap C_2^\perp)\right)\right\},$$
and then our statement follows from Theorem \ref{thm:fq}.
\end{proof}

\subsection{Geometric decomposition of  the coordinate space}
\label{geometric}
In this subsection we consider only the Hermitian and Euclidean cases, and we will explain that the required number of maximally entangled pairs is easy to compute when the generators of the supporting $\mathbb{F}_q$-linear code $C$ in $\mathbb{F}_q^{n}$ are a subset of a basis of $\mathbb{F}_q^{n}$ with a special metric structure and which is said to be {\it compatible with a geometric decomposition}  of $\mathbb{F}_q^{n}$ (see \cite{MetricStructure}). Notice that, in the Hermitian case, $q$ should be $q^2$, however, for simplicity's sake and only in this subsection, we will use $q$ as a generic symbol which means a power of a prime in the Euclidean case or an even power of a prime in the Hermitian case. For avoiding to repeat notation, again only in this subsection,  $\langle \vec{a} , \vec{b} \rangle$ will mean either the Hermitian inner product $\vec{a}  \cdot_h \vec{b}$ or the Euclidean one $\vec{a}  \cdot \vec{b}$.

Let us introduce some notation, we say that $\{\vec{v_1}, \vec{v_2}\}$ are {\it geometric generators of a hyperbolic plane} if $\langle \vec{v_1} , \vec{v_1} \rangle = \langle \vec{v_2} , \vec{v_2} \rangle =0$ and $\langle \vec{v_1} , \vec{v_2} \rangle =1$. We say that $\{\vec{v_1}, \vec{v_2}\}$ are {\it geometric generators of an elliptic plane} if  $\langle \vec{v_1} , \vec{v_1} \rangle =0$ and  $\langle \vec{v_2} , \vec{v_2} \rangle = \langle \vec{v_1} , \vec{v_2} \rangle =1$. Finally, we say that $\vec{v}$ {\it generates a non-singular space} if $\langle \vec{v}, \vec{v} \rangle \neq 0$.

Let $C \subseteq \mathbb{F}_q^n$ and set $\left\{ \vec{v_1}, \vec{v_2}, \ldots , \vec{v_n} \right\}$ a basis of $\mathbb{F}_q^n$ such that $C$ is generated by  $\left\{\vec{v_i}\right\}_{i \in I}$ for $I \subseteq \{1, 2, \ldots , n\}$. We say that $C$ is {\it compatible with a geometric decomposition} of $\mathbb{F}_q^n$ if $$\mathbb{F}_q^n = H_1 \oplus \cdots \oplus H_r \oplus L_1 \oplus \cdots \oplus L_s,$$ where the linear spaces from $H_1$, generated by  $\left\{\vec{v_1}, \vec{v_2}\right\}$, to $H_r $, generated by $\left\{\vec{v}_{2r-1} , \vec{v_r} \right\}$, are hyperbolic planes,  being the $\vec{v_i}$ geometric generators, and from $L_1$, generated by $\vec{v}_{2r+1}$, to $L_s$, generated by $\vec{v}_{2r+s} = \vec{v_n}$, are non-singular spaces. Then, we say that the vectors $\vec{v_1}, \vec{v_2}, \ldots, \vec{v_r}$ (and the indexes $1, 2, \ldots r$) are asymmetric and the vectors $\vec{v}_{r+1}, \vec{v}_{r+2}, \ldots, \vec{v}_n$ (and the indexes $r+1, r+2, \ldots, n$) are symmetric. Moreover, we also say that $(1,2)$, $\ldots$, $(r-1,r)$ are symmetric pairs.

In \cite{MetricStructure}, for the Euclidean inner product, it was proved that for characteristic different from 2, we can always obtain a basis $\left\{\vec{v_1}, \vec{v_2}, \ldots , \vec{v_n}\right\}$ of $\mathbb{F}_q^n$ such that $$\mathbb{F}_q^n = H_1 \oplus \cdots \oplus H_r \oplus L_1 \oplus \cdots \oplus L_s,$$ with $s \le 4$. For characteristic equal to 2, we may have a decomposition as in the case with characteristic different from 2, excepting when the vector $(1, 1, \ldots ,1 )$ belongs to the {\it radical} (or {\it hull}) of $C$, $C\cap C^\bot$ . In that particular case, it was given in \cite{MetricStructure} the following decomposition $$\mathbb{F}_q^n = H_1 \oplus \cdots \oplus H_r \oplus L_1 \oplus \cdots \oplus L_s \oplus E,$$ with $s \le 2$ and where $E$ is an elliptic plane.

Let $M = (\langle \vec{v_i} , \vec{v_j} \rangle )_{1\le i,j \le n}$, one has that $M$ has the form

$$
M= \left(
\begin{array}{l@{\hspace{0.4cm}}l@{\hspace{0.4cm}}l@{\hspace{0.4cm}}l@{\hspace{0.4cm}}l@{\hspace{0.4cm}}l@{\hspace{0.4cm}}l@{\hspace{0.4cm}}l}

0 & 1 & \nonumber  & \nonumber & \nonumber  & \nonumber & \nonumber & \nonumber  \\

1 & 0 & \nonumber  & \nonumber & \nonumber  & \nonumber & \nonumber & \nonumber  \\

\nonumber  & \nonumber & \ddots  & \nonumber & \nonumber  & \nonumber & \nonumber   & \nonumber  \\

\nonumber  & \nonumber & \nonumber  & 0  & 1 &  \nonumber & \nonumber & \nonumber   \\

\nonumber  & \nonumber &  \nonumber & 1  & 0 & \nonumber  & \nonumber & \nonumber   \\

\nonumber  & \nonumber &  \nonumber & \nonumber  & \nonumber & g_1  & \nonumber  &  \nonumber \\

\nonumber  & \nonumber &  \nonumber & \nonumber  & \nonumber & \nonumber  & \ddots  &  \nonumber  \\

\nonumber  & \nonumber &  \nonumber & \nonumber  & \nonumber & \nonumber  & \nonumber & g_s  \\

\end{array} \right),
$$
where $g_1, \ldots, g_s$ are non-zero, except for the case when the characteristic is 2 and $(1, 1, \ldots, 1)$ belongs to the radical of $C$; then we have that $M$ is equal to

$$
M= \left(
\begin{array}{l@{\hspace{0.4cm}}l@{\hspace{0.4cm}}l@{\hspace{0.4cm}}l@{\hspace{0.4cm}}l@{\hspace{0.4cm}}l@{\hspace{0.4cm}}l@{\hspace{0.4cm}}l@{\hspace{0.4cm}}l}

0 & 1 & \nonumber  & \nonumber & \nonumber  & \nonumber & \nonumber & \nonumber & \nonumber  \\

1 & 0 & \nonumber  & \nonumber & \nonumber  & \nonumber & \nonumber & \nonumber & \nonumber  \\

\nonumber  & \nonumber & \ddots  & \nonumber & \nonumber  & \nonumber & \nonumber   & \nonumber & \nonumber    \\

\nonumber  & \nonumber & \nonumber  & 0  & 1 &  \nonumber & \nonumber & \nonumber  & \nonumber   \\

\nonumber  & \nonumber &  \nonumber & 1  & 0 & \nonumber  & \nonumber & \nonumber & \nonumber  \\

\nonumber  & \nonumber &  \nonumber & \nonumber  & \nonumber & g_1  & \nonumber  &  \nonumber & \nonumber     \\

  \nonumber &  \nonumber & \nonumber  & \nonumber & \nonumber  & \nonumber & g_s & \nonumber & \nonumber \\

  \nonumber &  \nonumber & \nonumber  & \nonumber & \nonumber  & \nonumber & \nonumber & 0 & 1  \\

 \nonumber &  \nonumber & \nonumber  & \nonumber & \nonumber  & \nonumber & \nonumber & 1 & 1   \\

\end{array} \right).
$$

Now, let $i \in \{ 1 , 2, \ldots , n \}$. We define $i'$ as
\begin{itemize}

\item $i+1$ if $\vec{v_i}$ is the first generator of a hyperbolic plane $H$,

\item $i-1$ if $\vec{v_i}$ is the second generator of a hyperbolic plane $H$,

\item $i$ if $\vec{v_i}$ generates a one-dimensional linear space $L$,

\item $i+1$ if $\vec{v_i}$ is the first generator of an elliptic plane $E$.

\end{itemize}
Notice that we do not define $i'$ when $\vec{v_i}$ is the second geometric generator of an elliptic plane, because in this case, $(1, 1, \ldots, 1)$ is not in the radical of $C$ \cite{MetricStructure}. For $I \subseteq \{1, 2, \ldots, n \}$ we set $I'=\{ i' : i \in I \}$ and $I^\bot = \{1, 2, \ldots, n \} \setminus I'$. In this way we can
compute the dual code $C^\bot$ of a linear code $C$ generated by  $\left\{\vec{v_i}\right\}_{i\in I}$ easily since it is generated by $\left\{\vec{v_i}\right\}_{i \in I^\bot}$. Moreover, it can also be used to construct QECCs using the CSS construction  since   $C \subseteq C^\bot$ if and only if $I \subseteq I^\bot$. These kinds of decomposition arise naturally (i.e., for the usual generators) in some families of evaluation codes as BCH codes, toric codes, $J$-affine variety codes, negacyclic codes, constacylic codes, etc. and the previous approach has been exploited for constructing stabilizer quantum codes,  EAQECCs and LCD codes (see  \cite{qc4,LCD,Guenda18,lag2,Liu18} for instance).

The above paragraphs allow us to give a practical procedure for computing  the value $c$ given in Theorem \ref{thm:hermitian}, for the Hermitian product, and in Theorem \ref{thm:euclidean}, for the Euclidean product ($C_1=C_2=C)$. Assume that $C$ is a code  generated by $\left\{\vec{v_i}\right\}_{i\in I}$ compatible with a geometric decomposition of the corresponding coordinate space and write $I = I_R \sqcup  I_L$  (i.e. $I = I_R \cup  I_L$ and $I_R \cap I_L = \emptyset$), where  the radical of $C$, $C \cap C^\bot$, $\bot$ meaning dual with respect to the inner product $\langle ~ , ~ \rangle$, is generated by $\left\{\vec{v_i}\right\}_{i\in I_R}$. The radical of $C$ can be easily computed in this case: Indeed, given $i \in I$, one has that $i \in I_R$ if it holds that: $\vec{v_i}$ is the first generator of a hyperbolic plane $H$ and $i+1 \not\in I$,  $\vec{v_i}$ is the second generator of a hyperbolic plane $H$ and $i-1 \not\in I$, or $\vec{v_i}$ is the first generator of an elliptic plane $E$. Otherwise, $i \in I_L$. An equivalent way to characterize $I_R$ is the following: $I_R$ consists of asymmetric indexes whose pair does not belong to $I$, and $I_L$ consists of symmetric indexes and pairs of asymmetric indexes that belong to $I$. Summarizing, one has that when one considers a suitable basis as above, then
\begin{eqnarray*}
c  &=&  \dim_{\mathbb{F}_q} C - \dim_{\mathbb{F}_q} \left(C \cap C^\bot\right)
 =  \mathrm{card} (I) - \mathrm{card} (I \cap I^\bot)\\
   &=&  \mathrm{card} (I) - \mathrm{card} (I_R) = \mathrm{card} (I_L).
\end{eqnarray*}

Note that we have an EAQECC with maximal entanglement when $I=I_L$, i.e., when $I_R = \emptyset$. This fact was used, for instance,  in \cite{Guenda18}.

\section{Gilbert-Varshamov-type sufficient condition of existence of entanglement-assisted codes}
\label{sect3}
In this section we give a Gilbert-Varshamov-type bound which is valid for EAQECCs over arbitrary finite fields. A similar bound was stated in \cite{lai14} for the binary case.
\begin{theorem}\label{thmgv}
Assume the existence of positive integers $n$, $k\leq n$, $\delta$, $c\leq (n-k)/2$
such that
  \begin{equation}
    \frac{q^{n+k}-q^{n-k-2c}}{q^{2n}-1}\sum_{i=1}^{\delta - 1}
         {n \choose i}(q^2-1)^i < 1. \label{eqgv}
  \end{equation}
Then there exists an $\mathbb{F}_q$-linear code $C \subseteq
  \mathbb{F}_q^{2n}$ such that $\dim_{\mathbb{F}_q} C = n-k$,
  $d_s(C^\ps \setminus (C^\ps\cap C)) \geq \delta$ and
  $\dim_{\mathbb{F}_q} C - \dim_{\mathbb{F}_q} (C^\ps\cap C)= 2c$.
\end{theorem}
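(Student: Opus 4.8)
The goal is to exhibit an $\mathbb{F}_q$-linear code $C \subseteq \mathbb{F}_q^{2n}$ with $\dim C = n-k$, whose symplectic dual $C^{\perp_s}$ has minimum symplectic distance at least $\delta$ on $C^{\perp_s}\setminus(C\cap C^{\perp_s})$, and with the prescribed entanglement value $\dim C - \dim(C\cap C^{\perp_s})=2c$. The natural strategy is to count ordered bases (or generator matrices) that fail the desired properties and show that, under hypothesis (\ref{eqgv}), not every candidate can fail, so a good $C$ exists. First I would fix the ambient symplectic space $\mathbb{F}_q^{2n}$ with its nondegenerate symplectic form $\cdot_s$ and work with the Grassmannian-type object: the set of $(n-k)$-dimensional subspaces $C$ together with the datum of their radical $C\cap C^{\perp_s}$. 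Recall that for a symplectic form, $\dim C + \dim C^{\perp_s} = 2n$, so $\dim C^{\perp_s} = n+k$, and the radical $R:=C\cap C^{\perp_s}$ is precisely the degenerate part of the restricted form on $C$.

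\textbf{Key steps.} The approach I would carry out is:
\begin{itemize}
\item[(i)] Parametrize: build $C$ by choosing a generating set one vector at a time, or equivalently choose the radical $R$ first (a subspace on which the form vanishes and which is contained in $C^{\perp_s}$) of the dimension forcing $\dim C - \dim R = 2c$, and then extend $R$ to $C$ by a nondegenerate symplectic complement of dimension $2c$ inside $C$.
\item[(ii)] Count the ``bad'' event: for a fixed nonzero vector $\vec{w}\in\mathbb{F}_q^{2n}$ of symplectic weight $\mathrm{swt}(\vec{w})\le \delta-1$, estimate the fraction of admissible codes $C$ for which $\vec{w}$ lands in $C^{\perp_s}\setminus R$. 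The number of low-weight vectors is $\sum_{i=1}^{\delta-1}\binom{n}{i}(q^2-1)^i$, since a symplectic coordinate pair $(a_i,b_i)\ne(0,0)$ has $q^2-1$ choices.
\item[(iii)] Bound the per-vector probability by $\dfrac{q^{n+k}-q^{n-k-2c}}{q^{2n}-1}$, which is exactly the ratio of (number of vectors in $C^{\perp_s}$ outside the radical) to (number of nonzero vectors), computed via the dimension count $\dim C^{\perp_s}=n+k$ and $\dim R = n-k-2c$.
\item[(iv)] Apply the union bound: the expected number of offending low-weight vectors is the product in (\ref{eqgv}); if it is $<1$ then some admissible $C$ has no such vector, giving $d_s(C^{\perp_s}\setminus R)\ge\delta$ while retaining the prescribed dimension and entanglement $2c$.
\end{itemize}
I would carry these out in the order (iii)$\to$(ii)$\to$(iv), establishing the counting lemmas before assembling the union bound, since (iii) is where the precise exponents in (\ref{eqgv}) are pinned down.

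\textbf{Main obstacle.} The delicate point is step (iii) together with the averaging structure in (i): I must average the indicator ``$\vec{w}\in C^{\perp_s}\setminus R$'' over a family of codes $C$ in which \emph{both} $\dim C=n-k$ \emph{and} the radical dimension (equivalently $c$) are fixed, and show that for a generic nonzero $\vec{w}$ the fraction of such $C$ containing $\vec{w}$ in $C^{\perp_s}$ but not in $R$ equals the stated ratio. The cleanest route is to exploit the transitivity of the symplectic group $\mathrm{Sp}_{2n}(\mathbb{F}_q)$ on nonzero vectors (and on flags $R\subset C$ of the relevant type): by this transitivity, the fraction of admissible $C$ with $\vec{w}\in C^{\perp_s}\setminus R$ is independent of the choice of nonzero $\vec{w}$ and equals $(\,|C^{\perp_s}|-|R|\,)/(q^{2n}-1) = (q^{n+k}-q^{n-k-2c})/(q^{2n}-1)$. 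Verifying this transitivity/counting identity carefully—ensuring the radical condition is compatible with a single $\mathrm{Sp}$-orbit and that no double-counting inflates the estimate—is the step I expect to require the most care; the remainder is a routine union bound.
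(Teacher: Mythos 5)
Your proposal follows essentially the same route as the paper's proof: fix the family $A(k,c)$ of $(n-k)$-dimensional codes with prescribed radical dimension, use the transitivity of $\mathrm{Sp}(q,n)$ on nonzero vectors (together with the invariance of $A(k,c)$ under the symplectic group) to show the fraction of codes with a given nonzero $\vec{e}$ in $C^{\ps}\setminus(C\cap C^{\ps})$ equals $(q^{n+k}-q^{n-k-2c})/(q^{2n}-1)$, and then apply a union bound over the $\sum_{i=1}^{\delta-1}\binom{n}{i}(q^2-1)^i$ vectors of symplectic weight below $\delta$. The key identity in your step (iii) and the averaging concern you flag are exactly what the paper resolves by the double-counting of pairs $(\vec{e},V)$, so the plan is correct and matches the paper's argument.
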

\begin{proof}
We will use a close argument to the proof of the Gilbert-Varshamov bound for stabilizer codes \cite{calderbank97,ketkar06}. Let $\mathrm{Sp}(q,n)$ be the symplectic group over  $\mathbb{F}_q^{2n}$ \cite[Section 3]{grove02} and  $A(k,c)$ the set of $\mathbb{F}_q$-linear spaces $V \subseteq \mathbb{F}_q^{2n}$ such that $\dim_{\mathbb{F}_q} V = n-k$ and $$\dim_{\mathbb{F}_q} V - \dim_{\mathbb{F}_q} \left(V^\ps\cap V \right)= 2c.$$
For $\vec{0} \neq \vec{e} \in \mathbb{F}_q^{2n}$, define
\[
B(k,c,\vec{e})
= \left\{ V\in A(k,c) \mid \vec{e} \in V^\ps \setminus (V^\ps\cap V) \right\}.
\]
 Taking into account that the symplectic group acts transitively on $\mathbb{F}_q^{2n} \setminus \{\vec{0}\}$ \cite{aschbacher00,grove02}, it holds that for nonzero $\vec{e}_1$, $\vec{e}_2 \in \mathbb{F}_q^{2n}$, there exists $M \in \mathrm{Sp}(q,n)$ such that $\vec{e}_1 M= \vec{e}_2$, and, for $V_1, V_2 \in A(k,c)$, there exists $M \in \mathrm{Sp}(q,n)$ such that $V_1 M= V_2$.

Therefore for nonzero elements $\vec{e}_1, \vec{e}_2 \in \mathbb{F}_q^{2n}$ with $\vec{e}_1 M_1 = \vec{e}_2 \; \left(M_1 \in \mathrm{Sp}(q,n)\right)$ and some fixed linear space $V_1 \in A(k,c)$, we have the following chain of equalities:
\begin{eqnarray*}
  &&  \mathrm{card} \left(B(k, c, \vec{e}_1)\right) \\
  &=& \mathrm{card}\left(\{ V \in A(k,c) \mid \vec{e}_1 \in V^\ps \setminus (V^\ps\cap V) \}\right) \\
  &=& \mathrm{card}\left(\{ V_1 M  \mid \vec{e}_1 \in V^\ps M \setminus (V^\ps M \cap V M), M  \in \mathrm{Sp}(q,n) \}\right)\\
  &=& \mathrm{card} \left(\{V_1 M  M_1^{-1}  \mid \vec{e}_1 \in V^\ps M  M_1^{-1} \setminus (V^\ps M  M_1^{-1}\cap V M  M_1^{-1}), M  \in \mathrm{Sp}(q,n) \}\right)\\
  &=& \mathrm{card}\left(\{ V_1 M \mid \vec{e}_1 M_1 \in V^\ps M \setminus (V^\ps M \cap V M), M  \in \mathrm{Sp}(q,n) \} \right)\\
  &=& \mathrm{card}\left(\{ V_1 M  \mid \vec{e}_2 \in V^\ps M \setminus (V^\ps M\cap V M), M  \in \mathrm{Sp}(q,n) \} \right)\\
  &=& \mathrm{card}\left(\{ V \in A(k,c) \mid \vec{e}_2 \in V^\ps \setminus (V^\ps\cap V) \} \right) \\
  &=& \mathrm{card}\left(B(k, c, \vec{e}_2)\right).
\end{eqnarray*}

For each $V \in A(k,c)$, the number of vectors $\vec{e}$ in $\mathbb{F}_q^{2n}$ such that
$\vec{e} \in V^\ps \setminus (V^\ps\cap V)$ is $$\mathrm{card}(V^\ps)-\mathrm{card}\left(V^\ps\cap V\right)=q^{n+k}-q^{n-k-2c}.$$
The number of pairs $(\vec{e}$, $V)$ such that $\vec{0} \neq
\vec{e} \in V^\ps \setminus (V^\ps\cap V)$ is
\[
  \sum_{\vec{0} \neq \vec{e}\in \mathbb{F}_q^{2n}} \mathrm{card}\left(B(k,c,\vec{e})\right) = \mathrm{card}\left(A(k,c)\right) \left(q^{n+k}-q^{n-k-2c}\right),
  \]
  which implies
  \begin{equation}
    \frac{\mathrm{card}\left(B(k,c,\vec{e})\right)}{\mathrm{card}\left(A(k,c)\right)} = \frac{q^{n+k}-q^{n-k-2c}}{q^{2n}-1}.
    \label{eq101}
  \end{equation}
If there exists $V \in A(k,c)$
such that $V \notin B(k,c,\vec{e})$
for all $1 \leq \mathrm{swt}(\vec{e}) \leq \delta-1$,
then there will exist $V$ with the desired properties.
The number of vectors $\vec{e}$ such that
$1 \leq \mathrm{swt}(\vec{e}) \leq \delta-1$
is given by
\begin{equation}
  \sum_{i=1}^{\delta - 1} {n \choose i}(q^2-1)^i. \label{eq103}
\end{equation}
By combining Equalities (\ref{eq101}) and (\ref{eq103}),  we see that Inequality (\ref{eqgv}) is a sufficient condition
for ensuring the existence of a code $C$ as in our statement. This ends the proof.
\end{proof}

To finish this section, we derive an asymptotic form of Theorem \ref{thmgv}.
\begin{theorem}\label{thmasymp}
Let $R$, $\epsilon$ and $\lambda$  be nonnegative real numbers such that $R \leq 1$, $\epsilon < 1/2$, and $\lambda \leq (1-R)/2$. Let $h(x) :=  -x \log_q x -(1- x) \log_q (x-1)$  be the $q$-ary entropy function. For $n$ sufficiently large, the inequality
  \begin{equation}
    h(\epsilon) + \epsilon \log_q (q^2-1)  <  1-R, \label{eq104}\\
  \end{equation}
implies the existence of a code
  $C  \subseteq
  \mathbb{F}_q^{2n}$ over $\mathbb{F}_q$ such that
   $$\dim_{\mathbb{F}_q} C = \lceil n(1-R) \rceil, \;\;
  d_s(C^\ps \setminus (C^\ps\cap C)) \geq \lfloor n\epsilon\rfloor$$ and
  $$\dim_{\mathbb{F}_q} C - \dim_{\mathbb{F}_q} (C^\ps\cap C) = \lfloor 2n\lambda\rfloor.$$
\end{theorem}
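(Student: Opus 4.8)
The plan is to derive this asymptotic statement directly from the finite version, Theorem \ref{thmgv}, by substituting the scaled parameters and estimating the binomial sum via the standard $q$-ary entropy bound. Concretely, I would set
\[
k = \lceil n(1-R)\rceil - (n-k)\text{-convention}, \quad \delta = \lfloor n\epsilon\rfloor, \quad 2c = \lfloor 2n\lambda\rfloor,
\]
so that the code $C$ produced by Theorem \ref{thmgv} has the three desired properties. More precisely, since Theorem \ref{thmgv} yields $\dim_{\mathbb{F}_q} C = n-k$ with minimum distance at least $\delta$ and $\dim_{\mathbb{F}_q} C - \dim_{\mathbb{F}_q}(C^\ps \cap C) = 2c$, I would choose the integer $k$ so that $n-k = \lceil n(1-R)\rceil$, take $\delta = \lfloor n\epsilon\rfloor$, and take $c = \lfloor 2n\lambda\rfloor/2$ (checking that the constraint $c \le (n-k)/2$ from Theorem \ref{thmgv} is implied by $\lambda \le (1-R)/2$ for large $n$). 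The whole task then reduces to verifying that the hypothesis \eqref{eq104} forces the sufficient condition \eqref{eqgv} to hold once $n$ is large enough.

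The key analytic step is the estimate of the summation in \eqref{eq103}. I would use the standard bound
\[
\sum_{i=1}^{\delta-1} \binom{n}{i}(q^2-1)^i \le q^{n\left(h(\epsilon) + \epsilon \log_q(q^2-1)\right)}
\]
valid for $\delta - 1 \le n\epsilon$ (which holds since $\delta = \lfloor n\epsilon\rfloor$ and $\epsilon < 1/2$ keeps us on the increasing part of the binomial profile); this is the reason the two terms $h(\epsilon)$ and $\epsilon\log_q(q^2-1)$ appear together on the left of \eqref{eq104}. For the prefactor in \eqref{eqgv}, I would bound
\[
\frac{q^{n+k} - q^{n-k-2c}}{q^{2n}-1} \le \frac{q^{n+k}}{q^{2n}-1} \approx q^{k-n} = q^{-n(1-R)+o(n)},
\]
using $n - k = \lceil n(1-R)\rceil$ so that $k - n = -\lceil n(1-R)\rceil = -n(1-R) + o(n)$. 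Multiplying the two estimates, the left-hand side of \eqref{eqgv} is bounded above by
\[
q^{\,n\left(h(\epsilon) + \epsilon\log_q(q^2-1) - (1-R)\right) + o(n)},
\]
and the strict inequality \eqref{eq104} makes the exponent negative with a fixed negative slope, so the whole expression tends to $0$ and is in particular $< 1$ for all sufficiently large $n$. This establishes \eqref{eqgv}, and the conclusion follows from Theorem \ref{thmgv}.

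The main obstacle, and the place where care is genuinely required, is the bookkeeping with floors and ceilings: the discrete parameters $\lceil n(1-R)\rceil$, $\lfloor n\epsilon\rfloor$ and $\lfloor 2n\lambda\rfloor$ must be fed back into Theorem \ref{thmgv} in a way that respects both its integrality demands and the constraint $c \le (n-k)/2$, and one must confirm that the rounding contributes only $o(n)$ to each exponent so that it cannot overturn the strict inequality coming from \eqref{eq104}. A secondary subtlety is justifying the entropy bound on exactly the truncated sum up to $\delta - 1$ rather than up to $n$; this is where the hypothesis $\epsilon < 1/2$ is used, ensuring $n\epsilon$ lies below the peak at $n\cdot\frac{q^2-1}{q^2}$ so that the largest term controls the sum. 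Once these points are handled, the argument is a routine asymptotic comparison.
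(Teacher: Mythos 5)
Your proposal is correct and follows exactly the route the paper takes: the paper's own proof is a one-line reduction to Theorem \ref{thmgv} via the standard entropy estimates of \cite[Section III.C]{matsumotouematsu01}, and your argument simply supplies those details (the bound $\sum_{i=1}^{\delta-1}\binom{n}{i}(q^2-1)^i \le q^{n(h(\epsilon)+\epsilon\log_q(q^2-1))}$, the prefactor estimate $q^{k-n+o(n)}$, and the verification that the rounding only contributes $o(n)$ to the exponents). The only loose end, which is really an imprecision in the statement itself rather than in your argument, is that Theorem \ref{thmgv} forces $\dim_{\mathbb{F}_q} C - \dim_{\mathbb{F}_q}(C^\ps\cap C)=2c$ to be even, so one should take $c=\lfloor n\lambda\rfloor$ (say) rather than literally $\lfloor 2n\lambda\rfloor/2$ when the latter is not an integer.
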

\begin{proof}
It follows from Theorem \ref{thmgv} using a similar reasoning to that in \cite[Section III.C]{matsumotouematsu01}. 
\end{proof}

\section{EAQECCs coming from punctured QECCs}\label{sec:wip}

Our  final section gives parameters of EAQECCs obtained from punctured codes coming from self-orthogonal codes with respect to symplectic forms, or Hermitian or Euclidean inner products. Since fewer qudits should be transmitted through a noisy channel, they perform better. Let us start with the symplectic case.

\subsection{Symplectic form}\label{sec:wips}
Let $C \subseteq \mathbb{F}_q^{2n}$ be an $\mathbb{F}_q$-linear code. The {\it puncturing} of $C$ to the coordinate set $\{1$, \ldots,
$n-c\}$ is defined as the code of length $2(n-c)$ given by
\begin{multline*}
P(C) = \big\{ (a_1, \ldots, a_{n-c}| b_1, \ldots, b_{n-c}) \mid
(a_1, \ldots, a_n| b_1, \ldots, b_n) \in C \\
\mbox{for some} \;\; a_{n-c+1}, \ldots, a_n, b_{n-c+1}, \ldots, b_n\in \mathbb{F}_q \big\}.
\end{multline*}

In addition, the {\it shortening} of $C$ to the coordinate set $\{1$, \ldots,
$n-c\}$ is defined as the code
$$
S(C) = \{ (a_1, \ldots, a_{n-c}| b_1, \ldots, b_{n-c}) \mid \\
(a_1, \ldots, a_{n-c}, 0, \ldots, 0| b_1, \ldots, b_{n-c},
0, \ldots, 0)\in C \}.
$$

When we have a stabilizer code given by an $\mathbb{F}_q$-linear code $C$ such that $C \subseteq C^\ps \subseteq
\mathbb{F}_q^{2n}$,
we can construct an entanglement-assisted code
from $P(C) \subseteq \mathbb{F}_q^{2(n-c)}$.
By \cite{pless98},
$P(C)^\ps = S(C^\ps)$
and we deduce $$P(C) \cap P(C)^\ps = P(C)\cap S(C^\ps)
= S(C \cap C^\ps) = S(C).$$

The minimum distance of the constructed entanglement-assisted code is
$d_s(S(C^\ps) \setminus S(C))$ which is larger than or equal to $d_s(C^\ps \setminus C)$. Following \cite{pless98}, one can prove the following result.

\begin{proposition}
Assume that a positive integer $c$ satisfies $2c \leq d_H(C^\ps \setminus \{\vec{0}\})-1$,
  then
  $$\begin{array}{l}
    \dim_{\mathbb{F}_q} P(C) = \dim_{\mathbb{F}_q} C, \;\;\mbox{and}\\ \\
    \dim_{\mathbb{F}_q} P(C)\cap P(C)^\ps = \dim_{\mathbb{F}_q} S(C) = \dim_{\mathbb{F}_q} C - 2c.
  \end{array}$$
\end{proposition}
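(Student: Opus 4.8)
The plan is to fix notation for the deleted positions and then prove the two dimension equalities separately, while drawing the equality $\dim_{\mathbb{F}_q}(P(C)\cap P(C)^\ps)=\dim_{\mathbb{F}_q}S(C)$ for free from the identity $P(C)\cap P(C)^\ps=S(C)$ established just above the statement. Write $T=\{n-c+1,\ldots,n\}\cup\{2n-c+1,\ldots,2n\}$ for the set of $2c$ coordinates that puncturing and shortening remove, so that $|T|=2c$ and $T$ is symmetric with respect to the two halves of $\mathbb{F}_q^{2n}$.

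First I would prove $\dim_{\mathbb{F}_q}P(C)=\dim_{\mathbb{F}_q}C$. The puncturing map $\pi\colon C\to P(C)$ that restricts a codeword to the complement of $T$ is surjective by the very definition of $P(C)$, so it suffices to show that it is injective. A codeword in $\ker\pi$ vanishes on all coordinates outside $T$; hence it is supported on $T$ and its Hamming weight is at most $|T|=2c$. By hypothesis $2c\le d_H(C\setminus\{\vec{0}\})-1$, so any nonzero such codeword would violate the minimum Hamming distance of $C$. Therefore $\ker\pi=\{\vec{0}\}$ and $\dim_{\mathbb{F}_q}P(C)=\dim_{\mathbb{F}_q}C$.

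For the second equality I would describe $S(C)$ through the complementary projection $\rho\colon C\to\mathbb{F}_q^{T}\cong\mathbb{F}_q^{2c}$ onto the deleted coordinates: by definition of shortening, $S(C)$ is $\mathbb{F}_q$-isomorphic to $\ker\rho$, whence $\dim_{\mathbb{F}_q}S(C)=\dim_{\mathbb{F}_q}C-\dim_{\mathbb{F}_q}\mathrm{im}(\rho)$. Since $\mathrm{im}(\rho)\subseteq\mathbb{F}_q^{2c}$ this already gives $\dim_{\mathbb{F}_q}S(C)\ge\dim_{\mathbb{F}_q}C-2c$, and the whole point is to upgrade this to an equality by showing that $\rho$ is onto. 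This is the dual counterpart of the injectivity proved above: $\rho$ is surjective exactly when $C^\ps$ contains no nonzero vector supported on $T$. To see the equivalence I would use the signed coordinate swap $(\vec{a}|\vec{b})\mapsto(-\vec{b}|\vec{a})$, which intertwines the Euclidean and symplectic duals and, because $T$ is symmetric, maps vectors supported on $T$ to vectors supported on $T$; thus a failure of surjectivity of $\rho$ produces a nonzero element of $C^\ps$ supported on $T$.

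The main obstacle is therefore this last surjectivity step, and here I would invoke the duality $P(C)^\ps=S(C^\ps)$ from \cite{pless98} together with the nondegeneracy of the symplectic form, which gives $\dim_{\mathbb{F}_q}P(C)^\ps=2(n-c)-\dim_{\mathbb{F}_q}P(C)=2(n-c)-\dim_{\mathbb{F}_q}C$. Feeding $\dim_{\mathbb{F}_q}P(C)=\dim_{\mathbb{F}_q}C$ into the modular identity $\dim_{\mathbb{F}_q}(P(C)\cap P(C)^\ps)=\dim_{\mathbb{F}_q}P(C)+\dim_{\mathbb{F}_q}P(C)^\ps-\dim_{\mathbb{F}_q}(P(C)+P(C)^\ps)$ reduces the claim to $\dim_{\mathbb{F}_q}(P(C)+P(C)^\ps)=2n-\dim_{\mathbb{F}_q}C=\dim_{\mathbb{F}_q}C^\ps$; since $C\subseteq C^\ps$ forces $P(C)+P(C)^\ps\subseteq P(C^\ps)$, this is precisely the statement that puncturing preserves the dimension of $C^\ps$, i.e.\ the injectivity-of-puncturing argument of the second paragraph applied to $C^\ps$ and its low-weight codewords on $T$. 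Collecting the two equalities then yields $\dim_{\mathbb{F}_q}(P(C)\cap P(C)^\ps)=\dim_{\mathbb{F}_q}S(C)=\dim_{\mathbb{F}_q}C-2c$, as claimed.
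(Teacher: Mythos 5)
Your first paragraph is correct, and you have correctly isolated the crux of the second equality: $\dim_{\mathbb{F}_q}S(C)=\dim_{\mathbb{F}_q}C-2c$ holds if and only if the projection $\rho$ onto the $2c$ deleted coordinates is surjective, which by puncture/shorten duality happens if and only if $C^\ps$ contains no nonzero vector supported on $T$. The gap is in how you discharge this last condition. You assert that it is ``the injectivity-of-puncturing argument applied to $C^\ps$,'' but that argument needs $2c\leq d_H(C^\ps\setminus\{\vec{0}\})-1$, whereas the hypothesis only controls $d_H(C\setminus\{\vec{0}\})$. Since $C\subseteq C^\ps$, the dual is the \emph{larger} code, so $d_H(C^\ps\setminus\{\vec{0}\})\leq d_H(C\setminus\{\vec{0}\})$ and the needed inequality does not follow. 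There is also a secondary slip in the same paragraph: the containment $P(C)+P(C)^\ps\subseteq P(C^\ps)$ only bounds $\dim_{\mathbb{F}_q}\bigl(P(C)+P(C)^\ps\bigr)$ from \emph{above}, while your modular-identity reduction requires the lower bound $\dim_{\mathbb{F}_q}\bigl(P(C)+P(C)^\ps\bigr)\geq 2n-\dim_{\mathbb{F}_q}C$; showing that the containment is an equality is exactly as hard as the surjectivity of $\rho$ itself.

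This gap cannot be closed, because the statement is false under the stated hypothesis. Take $q=2$, $n=4$, $C=\langle(1,1,1,1|0,0,0,0)\rangle$: then $C\subseteq C^\ps$, $d_H(C\setminus\{\vec{0}\})=4$, so $c=1$ is admissible, yet $\dim_{\mathbb{F}_q}S(C)=0\neq\dim_{\mathbb{F}_q}C-2c=-1$. For a non-degenerate failure take $n=5$ and $C=\langle(1,1,1,1,0|0,0,0,0,0),(0,0,0,0,0|1,1,1,1,0)\rangle$ with $c=1$: here $\dim_{\mathbb{F}_q}S(C)=2$ instead of $0$, precisely because $C^\ps$ contains the weight-one vector $(0,0,0,0,1|0,0,0,0,0)$ supported on $T$. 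The paper offers no proof beyond the citation of \cite{pless98}, but the cited puncture/shorten result itself places the distance condition on the dual: $t<d(D)$ gives $\dim S_T(D^\perp)=\dim D^\perp-t$, so controlling $S_T(C)$ requires $t$ to be smaller than the minimum distance of the dual of $C$. The proposition (and with it Theorem \ref{thm:newcode}) should therefore carry the hypothesis $2c\leq d_H(C^\ps\setminus\{\vec{0}\})-1$, which implies the stated one; under that corrected hypothesis your surjectivity argument for $\rho$, and hence your whole proof, does go through.
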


Summarizing these observations, we have the following theorem. Notice that a close result has been given in  \cite{lai12}  for binary codes.

\begin{theorem}\label{thm:newcode}
Let $C \subseteq \mathbb{F}_q^{2n}$ be an $\mathbb{F}_q$-linear code with $\dim_{\mathbb{F}_q} C = n-k$ and $C \subseteq C^\ps$. Assume that a positive integer $c$ satisfies $2c \leq d_H(C^\ps \setminus \{\vec{0}\})-1$, then the punctured code $P(C)$ provides an $$[[n-c, k, \geq d_s(C^\ps\setminus C);  c]]_q$$ entanglement-assisted code.
\end{theorem}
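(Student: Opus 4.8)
The plan is to read off the entanglement-assisted code directly from the punctured code $P(C) \subseteq \mathbb{F}_q^{2(n-c)}$ by specializing Theorem \ref{thm:fq}. Since $P(C)$ sits inside $\mathbb{F}_q^{2(n-c)}$ it describes $n-c$ physical qudits, which already matches the length in the statement. I would first record the two structural facts assembled just above: the puncturing--shortening duality $P(C)^\ps = S(C^\ps)$ from \cite{pless98}, and the consequent identity $P(C) \cap P(C)^\ps = S(C \cap C^\ps) = S(C)$, the last equality using $C \subseteq C^\ps$.

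Next I would compute the two dimensions that Theorem \ref{thm:fq} consumes. The hypothesis $2c \leq d_H(C \setminus \{\vec 0\}) - 1$ ensures, through the preceding Proposition, that removing $2c$ coordinates annihilates no nonzero codeword, whence $\dim_{\mathbb{F}_q} P(C) = \dim_{\mathbb{F}_q} C = n-k$, while $\dim_{\mathbb{F}_q}(P(C) \cap P(C)^\ps) = \dim_{\mathbb{F}_q} S(C) = (n-k) - 2c$. Therefore the self-orthogonality defect of $P(C)$ is
\[
\dim_{\mathbb{F}_q} P(C) - \dim_{\mathbb{F}_q}\bigl(P(C) \cap P(C)^\ps\bigr) = 2c,
\]
so by Theorem \ref{thm:fq} and Proposition \ref{propdim} the code $P(C)$ needs exactly $c$ maximally entangled pairs, and specializing Theorem \ref{thm:fq} to $P(C)$ --- a code on $n-c$ physical qudits of $\mathbb{F}_q$-dimension $n-k$ --- produces the encoded-qudit count recorded in the statement.

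It remains to control the minimum distance. By Theorem \ref{thm:fq}, the distance of the EA code built from $P(C)$ equals $d_s\bigl(P(C)^\ps \setminus (P(C) \cap P(C)^\ps)\bigr) = d_s\bigl(S(C^\ps) \setminus S(C)\bigr)$. Any vector in this set is the restriction of a codeword of $C^\ps \setminus C$ that already vanishes on the $2c$ deleted positions, so its symplectic weight is unchanged by the restriction and is therefore at least $d_s(C^\ps \setminus C)$; this yields $d_s(S(C^\ps)\setminus S(C)) \geq d_s(C^\ps \setminus C)$, the bound appearing in the statement.

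I expect essentially all of the substance to sit in that dimension count, i.e. in the preceding Proposition, with the rest being a formal application of Theorem \ref{thm:fq}. The delicate point is that $2c \leq d_H(C \setminus \{\vec 0\}) - 1$ is exactly the condition forbidding a nonzero codeword of $C$ supported entirely on the deleted coordinate set; this single hypothesis simultaneously preserves the dimension under puncturing and forces $\dim_{\mathbb{F}_q} S(C) = \dim_{\mathbb{F}_q} C - 2c$, and without it both the ebit bookkeeping and the identification with Theorem \ref{thm:fq} would break down.
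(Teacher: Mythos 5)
Your proposal follows exactly the route the paper itself takes (the theorem is presented as a summary of the observations in Subsection~\ref{sec:wips} together with the preceding Proposition), but two of the steps you assert do not actually go through, and they are precisely the steps carrying the content. First, the logical-qudit count. Applying Theorem~\ref{thm:fq} to $P(C)\subseteq\mathbb{F}_q^{2(n-c)}$ --- a code of length $n-c$ and dimension $n-k=(n-c)-(k-c)$ with self-orthogonality defect $2c$ --- yields $(k-c)+c=k$ encoded qudits, not the $k+c$ recorded in the statement; your claim that the specialization ``produces the encoded-qudit count recorded in the statement'' is an assertion, not a computation, and the computation gives $k$. As a sanity check, the $[[5,1,3]]_2$ code with $c=1$ would otherwise produce a $[[4,2,3;1]]_2$ code, violating the entanglement-assisted Singleton bound $n+c-k\geq 2(d-1)$ in its valid range $d\le (n+2)/2$; what puncturing actually yields there is the well-known $[[4,1,3;1]]_2$ code. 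So either the parameters must read $[[n-c,\,k,\,\geq d_s(C^\ps\setminus C);\,c]]_q$ or an argument other than Theorem~\ref{thm:fq} is required.

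Second, your justification of $\dim_{\mathbb{F}_q} S(C)=\dim_{\mathbb{F}_q}C-2c$ invokes the wrong mechanism. The fact that no nonzero codeword of $C$ is supported on the $2c$ deleted coordinates gives $\dim_{\mathbb{F}_q} P(C)=\dim_{\mathbb{F}_q} C$, but it says nothing about the shortening: $\dim_{\mathbb{F}_q} S(C)=\dim_{\mathbb{F}_q} C-2c$ holds exactly when the projection of $C$ onto the deleted $2c$ coordinates is \emph{surjective}, a condition governed by the minimum weight of the dual code $C^\ps$ (equivalently $C^\perp$), not of $C$. The hypothesis $2c\leq d_H(C\setminus\{\vec{0}\})-1$ does not force this: for $C=\langle(1,1,1\,|\,0,0,0)\rangle\subseteq\mathbb{F}_2^6$ and $c=1$ the hypothesis holds, yet $S(C)=\{\vec{0}\}$ and $P(C)=\langle(1,1\,|\,0,0)\rangle$ is self-orthogonal, so the punctured code consumes $0$ ebits rather than $c=1$. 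A hypothesis of the shape $2c\leq d_H(C^\ps\setminus\{\vec{0}\})-1$ (strictly stronger here, since $C\subseteq C^\ps$) is what the puncturing/shortening dimension formulas of \cite{pless98} actually require. Both defects are inherited from the paper's own sketch, but a correct proof has to close or at least flag them rather than restate them.
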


Our next two sections are devoted to give similar results but considering Hermitian or Euclidean inner product.

\subsection{Hermitian inner product}
Let $C \subseteq \mathbb{F}_{q^2}^n$  an $\mathbb{F}_q$-linear code. The {\it $h$-puncturing} of $C$ to the coordinate set $\{1, 2, \ldots,
n-c\}$ is the code of length $n-c$ defined as
$$
P_h(C) = \\ \left\{ (a_1, a_2, \ldots, a_{n-c}) \mid
(a_1, a_2, \ldots, a_n) \in C \;
\mbox{for some} \; a_{n-c+1}, \ldots, a_n \in \mathbb{F}_{q^2} \right\}.
$$
The {\it $h$-shortening} of $C$ to the coordinate set $\{1, 2, \ldots, n-c\}$ is  the code of length $n-c$ defined as
$$S_h(C) = \left\{ (a_1, a_2, \ldots, a_{n-c}) \mid (a_1, a_2, \ldots, a_{n-c}, 0, \ldots, 0) \in C \right\}.$$

The above concepts allows us to state the following theorem.

\begin{theorem}\label{thm:newcodeh}
Let $C \subseteq \mathbb{F}_{q^2}^n$ be an $\mathbb{F}_{q^2}$-linear code with $\dim_{\mathbb{F}_{q^2}} C = (n-k)/2$ and suppose that $C \subseteq C^\ph$. Let $c$ be a positive integer such that $c \leq d_H(C^\ph \setminus \{\vec{0}\})-1$, then the punctured code $P_h(C)$ provides an $$[[n-c, k, \geq d_H(C^\ph\setminus C);  c]]_q$$ entanglement-assisted code.
\end{theorem}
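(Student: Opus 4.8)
The plan is to mirror the symplectic development of Subsection~\ref{sec:wips}, replacing the trace-symplectic form by the Hermitian inner product and invoking Theorem~\ref{thm:hermitian} in place of Theorem~\ref{thm:fq} at the end. First I would record the Hermitian analogue of the Pless puncturing/shortening duality, namely $P_h(C)^{\perp_h} = S_h(C^{\perp_h})$. This follows from \cite{pless98} exactly as in the Euclidean case: the only extra ingredient of $\cdot_h$ over the Euclidean product is the coordinatewise $q$-th power, which commutes with deleting and with zeroing coordinates, so the standard argument that puncturing and shortening are mutually dual operations carries over verbatim.

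Next, using this duality together with the hypothesis $C \subseteq C^{\perp_h}$, I would identify the Hermitian hull of the punctured code:
\[
P_h(C) \cap P_h(C)^{\perp_h} = P_h(C) \cap S_h\left(C^{\perp_h}\right) = S_h\left(C \cap C^{\perp_h}\right) = S_h(C).
\]
The inclusion $S_h(C) \subseteq P_h(C)\cap S_h(C^{\perp_h})$ is immediate from $C \subseteq C^{\perp_h}$. For the reverse inclusion one takes $v$ that extends to some $x \in C$ and satisfies $(v,\vec{0}) \in C^{\perp_h}$, and observes that $x-(v,\vec{0})$ is a codeword of $C^{\perp_h}$ supported on the $c$ deleted coordinates; the weight condition then forces it to vanish, whence $x=(v,\vec{0})\in C$.

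The heart of the argument is the dimension count, i.e.\ the Hermitian analogue of the proposition preceding Theorem~\ref{thm:newcode}: under $c \le d_H(C\setminus\{\vec{0}\})-1$ one has $\dim_{\mathbb{F}_{q^2}} P_h(C) = \dim_{\mathbb{F}_{q^2}} C$ and $\dim_{\mathbb{F}_{q^2}} S_h(C) = \dim_{\mathbb{F}_{q^2}} C - c$. The first equality holds because a nonzero codeword of $C$ killed by the puncturing would be supported on the $c$ deleted positions, contradicting $c<d_H(C)$, so the puncturing map is injective on $C$. The second equality is where the minimum-distance hypothesis does its real work, and it is the step I expect to be the main obstacle: one must show that zeroing and deleting the $c$ chosen coordinates drops the dimension by \emph{exactly} $c$, equivalently that the restriction of the relevant code to the deleted positions is surjective. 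I would handle this through the dual identity $S_h(C)^{\perp_h} = P_h(C^{\perp_h})$ and the absence of nonzero codewords supported on the deleted set, following \cite{pless98}; the subtlety is precisely which minimum distance controls this surjectivity, and this is the point deserving the most care.

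Finally I would assemble the EAQECC. Applying Theorem~\ref{thm:hermitian} (via Proposition~\ref{propdimh}) to $P_h(C)\subseteq\mathbb{F}_{q^2}^{\,n-c}$, the optimal entanglement equals $\dim_{\mathbb{F}_{q^2}} P_h(C) - \dim_{\mathbb{F}_{q^2}}\left(P_h(C)\cap P_h(C)^{\perp_h}\right) = \dim_{\mathbb{F}_{q^2}} C - \left(\dim_{\mathbb{F}_{q^2}} C - c\right) = c$, so the length, dimension and entanglement parameters follow. For the distance, the relevant coset is $S_h(C^{\perp_h})\setminus S_h(C)$; since its words lift to words of $C^{\perp_h}\setminus C$ of equal Hamming weight, shortening cannot decrease the weight, giving $d_H\!\left(S_h(C^{\perp_h})\setminus S_h(C)\right) \ge d_H\!\left(C^{\perp_h}\setminus C\right)$, which is the claimed lower bound.
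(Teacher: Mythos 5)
Your outline follows the paper's proof step for step: the Hermitian Pless duality $P_h(C)^{\perp_h}=S_h(C^{\perp_h})$, the hull identity $P_h(C)\cap P_h(C)^{\perp_h}=S_h(C)$, the dimension counts for $P_h(C)$ and $S_h(C)$, and a final appeal to Theorem~\ref{thm:hermitian}; the paper merely compresses the first three steps into ``a similar argument to that used in Subsection~\ref{sec:wips}'' and a citation of \cite{pless98}. However, the step you single out as ``the main obstacle'' is a genuine gap, and the patch you sketch does not close it. Both the reverse inclusion $P_h(C)\cap S_h(C^{\perp_h})\subseteq S_h(C)$ and the equality $\dim_{\mathbb{F}_{q^2}}S_h(C)=\dim_{\mathbb{F}_{q^2}}C-c$ reduce to the assertion that no nonzero codeword of $C^{\perp_h}$ is supported on the $c$ deleted coordinates: your vector $x-(v|\vec{0})$ lies in $C^{\perp_h}$, not in $C$, and the surjectivity of the projection of $C$ onto the deleted coordinates is equivalent to $C^{\perp_h}$ containing no nonzero word there. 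The hypothesis $c\le d_H(C\setminus\{\vec{0}\})-1$ controls the wrong code: since $C\subseteq C^{\perp_h}$ one only has $d_H(C^{\perp_h})\le d_H(C)$, and the result of \cite{pless98} yields $\dim S_h(C)=\dim C-c$ under $c< d_H(C^{\perp_h})$, not under $c< d_H(C)$. Concretely, $C=\langle(1,1,1,1)\rangle\subseteq\mathbb{F}_4^4$ is Hermitian self-orthogonal with $d_H(C\setminus\{\vec{0}\})=4$, so $c=2$ is admissible, yet $(0,0,1,1)\in C^{\perp_h}$ is supported on the deleted positions and $\dim S_h(C)=0\neq\dim C-2$. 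The argument (and the statement) becomes correct if the hypothesis is strengthened to $c\le d_H(C^{\perp_h}\setminus\{\vec{0}\})-1$; the paper's own proof cites \cite{pless98} at this point without distinguishing which minimum distance is required, so this is a defect you share with the source rather than one you introduced --- but asserting that ``the weight condition forces it to vanish'' does not resolve it.

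A second, smaller point: you claim the ``length, dimension and entanglement parameters follow,'' but carrying out the computation you set up gives a different dimension than the one displayed. Applying Theorem~\ref{thm:hermitian} to $P_h(C)\subseteq\mathbb{F}_{q^2}^{n-c}$, which has dimension $(n-k)/2$ and entanglement $c$, yields $(n-c)-2\cdot\tfrac{n-k}{2}+c=k$ logical qudits, not $k+c$. You should make this count explicit and reconcile it with the stated parameters rather than leaving it implicit.
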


\begin{proof}
By the assumption, $\dim_{\mathbb{F}_{q^2}} P_h(C) = \dim_{\mathbb{F}_{q^2}} C$. By  a similar argument to that used in Subsection \ref{sec:wips}, we also see that $P_h(C) \cap P_h(C)^\ph = S_h(C)$. Now we have that $c \le d_H (C^\ph) -1$, so $\dim_{\mathbb{F}_{q^2}} P_h(C) = \dim_{\mathbb{F}_{q^2}} C$ and $\dim_{\mathbb{F}_{q^2}} S_h(C) = \dim_{\mathbb{F}_{q^2}} C - c$ \cite{pless98}. It also holds that $d_H(P_h(C)^\ph \setminus S_h(C))
  \geq d_H(C^\ph \setminus C)$ and this concludes the proof by Theorem \ref{thm:hermitian}.
\end{proof}

\subsection{Euclidean inner product}
Our result concerning Euclidean duality is the following:
\begin{theorem}\label{thm:newcodee}
Let $C_2 \subseteq C_1 \subseteq \mathbb{F}_{q}^n$ be two $\mathbb{F}_{q}$-linear codes such that $\dim C_i = k_i$, $1 \leq i \leq 2$. The standard construction of CSS codes uses $C_2 \times C_1^\perp$ as the stabilizer.
Assume that $c$ is a positive integer such that $$c \leq \min\left\{d_H(C_2^\bot \setminus \{\vec{0}\}),
  d_H(C_1 \setminus \{\vec{0}\})\right\}-1,$$
  then the punctured code $P_h(C_2)\times P_h(C_1^\perp)$ provides an
$$[[n-c, k_1-k_2, \geq \min \left\{ d_H(C_1\setminus C_2), d_H(C_2^\perp\setminus C_1^\perp) \right\}; c]]_q$$ entanglement-assisted code.
\end{theorem}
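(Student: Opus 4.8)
The plan is to exhibit the punctured code as the symplectic puncturing of the Euclidean CSS stabilizer $C_0 = C_2 \times C_1^\perp \subseteq \mathbb{F}_q^{2n}$ and then to invoke the machinery of Subsection \ref{sec:wips} together with Theorem \ref{thm:euclidean}, exactly paralleling the proof of Theorem \ref{thm:newcodeh}. First I would record that $C_0$ is a genuine CSS stabilizer: since $(A \times B)^\ps = B^\perp \times A^\perp$, one has $C_0^\ps = (C_1^\perp)^\perp \times C_2^\perp = C_1 \times C_2^\perp$, so the hypothesis $C_2 \subseteq C_1$ is precisely $C_0 \subseteq C_0^\ps$. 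Next I would observe that symplectic puncturing respects the product: puncturing $C_0$ on the coordinate set $\{1, \ldots, n-c\}$ deletes the last $c$ entries from both halves, whence $P(C_0) = P_h(C_2) \times P_h(C_1^\perp)$ and likewise $S(C_0) = S_h(C_2) \times S_h(C_1^\perp)$.

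The second step is the dimension bookkeeping. The hypothesis $c \leq \min\{d_H(C_2 \setminus \{\vec{0}\}), d_H(C_1^\perp \setminus \{\vec{0}\})\} - 1$ guarantees, factor by factor and following \cite{pless98}, that puncturing preserves dimension, $\dim_{\mathbb{F}_q} P_h(C_2) = k_2$ and $\dim_{\mathbb{F}_q} P_h(C_1^\perp) = n - k_1$, while shortening lowers it by $c$, $\dim_{\mathbb{F}_q} S_h(C_2) = k_2 - c$ and $\dim_{\mathbb{F}_q} S_h(C_1^\perp) = n - k_1 - c$. As in Subsection \ref{sec:wips}, $P(C_0) \cap P(C_0)^\ps = S(C_0 \cap C_0^\ps) = S(C_0)$, so $\dim_{\mathbb{F}_q} P(C_0) - \dim_{\mathbb{F}_q} (P(C_0) \cap P(C_0)^\ps) = 2c$; that is, the required number of maximally entangled pairs is exactly $c$.

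For the distance I would use that $P(C_0)^\ps = S(C_0^\ps)$ and that, by the estimate recalled in Subsection \ref{sec:wips}, the minimum symplectic distance $d_s(S(C_0^\ps) \setminus S(C_0))$ of the punctured code is bounded below by $d_s(C_0^\ps \setminus C_0)$. Since the symplectic weight of a vector in the product $C_1 \times C_2^\perp$ splits into the two underlying Hamming weights, a short computation gives $d_s(C_0^\ps \setminus C_0) = \min\{d_H(C_1 \setminus C_2), d_H(C_2^\perp \setminus C_1^\perp)\}$. Feeding the dimension count and this distance bound into Theorem \ref{thm:euclidean} (equivalently, into the symplectic Theorem \ref{thm:fq} applied to $P(C_0)$) then yields the asserted $[[n-c, k_1 - k_2 + c, \geq \min\{d_H(C_1 \setminus C_2), d_H(C_2^\perp \setminus C_1^\perp)\}; c]]_q$ code.

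I expect the only genuine work to be the compatibility bookkeeping: verifying that symplectic puncturing and shortening commute with the Cartesian product and with Euclidean duality (the identities $P(A \times B) = P_h(A) \times P_h(B)$, $(A \times B)^\ps = B^\perp \times A^\perp$, and $P_h(C)^\perp = S_h(C^\perp)$), and that the per-factor dimension changes combine to give entanglement exactly $c$ rather than a larger value. None of this is deep, but it is where an index or sign slip would occur, so I would keep the two factors separate throughout and recombine only at the end.
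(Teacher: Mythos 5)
Your proposal is correct and follows essentially the same route as the paper: the paper's proof likewise checks that the hypothesis on $c$ preserves the dimensions of $P_h(C_2)$ and $P_h(C_1^\perp)$, computes $\dim P(C_2\times C_1^\perp) - \dim\bigl(P(C_2\times C_1^\perp)\cap P(C_2\times C_1^\perp)^\ps\bigr) = 2c$ via $S(C_2\times C_1^\perp)=S_h(C_2)\times S_h(C_1^\perp)$, and concludes by applying Theorem \ref{thm:newcode} to the stabilizer $C_2\times C_1^\perp$. Your extra verifications (the identities $(A\times B)^\ps = B^\perp\times A^\perp$ and $P(A\times B)=P_h(A)\times P_h(B)$, and the identification of $d_s(C_0^\ps\setminus C_0)$ with the stated minimum) are exactly the bookkeeping the paper leaves implicit.
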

\begin{proof}
The assumption $c \leq \min\left\{d_H(C_2^\bot \setminus \{\vec{0}\}),
  d_H(C_1 \setminus \{\vec{0}\})\right\}-1$
 implies the following two equalities: $\dim_{\mathbb{F}_{q}} P_h(C_2) = \dim_{\mathbb{F}_{q}} C_2$ and $\dim_{\mathbb{F}_{q}} P_h(C_1^\perp) = \dim_{\mathbb{F}_{q}} C_1^\perp$. Therefore
 \[
 \dim_{\mathbb{F}_{q}} P(C_2 \times C_1^\perp) = \dim_{\mathbb{F}_{q}}  P_h(C_2) + \dim_{\mathbb{F}_{q}} P_h(C_1^\perp) = n-(k_1-k_2).
 \]
Furthermore it holds that
  \begin{multline*}
\dim_{\mathbb{F}_{q}} P (C_2 \times C_1^\perp ) \cap P(C_2 \times C_1^\perp)^\ps
=  \dim_{\mathbb{F}_{q}} S(C_2 \times C_1^\perp)\\
    = \dim_{\mathbb{F}_{q}} [S_h(C_2) \times S_h(C_1^\perp)]
    = \dim_{\mathbb{F}_{q}} S_h(C_2) + \dim_{\mathbb{F}_{q}} S_h(C_1^\perp )\\ = \left(\dim_{\mathbb{F}_{q}} S_h(C_2)-c \right) + \left(\dim_{\mathbb{F}_{q}} S_h(C_1^\perp)-c \right)
   = n-(k_1-k_2)-2c.
  \end{multline*}
Applying Theorem \ref{thm:newcode} to the code $C_2 \times C_1^\perp$, the proof is completed.
\end{proof}

\vspace{0.5cm}

\textbf{Acknowledgments:} We thank Francisco R. Fernandes and Ruud Pellikaan for pointing out a mistake in Theorem \ref{thmasymp} on an earlier version of this article. We thank Markus Grassl and Hualu Liu for pointing out a mistake in Section \ref{sec:wip} on an earlier version of this article.

\end{document}